\newtheorem{corollary}{Corollary}
\newtheorem{proposition}{Proposition}
\newcommand{\ot}[1]{\frac{\mathrm{d} {#1}}{\mathrm{d}t}}
\newcommand{\annuity}[2]{\bar{a}^{#1}_{\overline{#2}|}}
\newcommand{\annuityinf}[1]{\bar{a}^{#1}_{\overline{\infty}|}}
\newcommand{\Tmatthias}[1]{\todo[inline, color=blue!40]{Matthias: #1}}
\begin{document}
	
	\begin{tikzpicture}[remember picture,overlay]
		\node[anchor=north east,inner sep=20pt] at (current page.north east)
		{\includegraphics[scale=0.2]{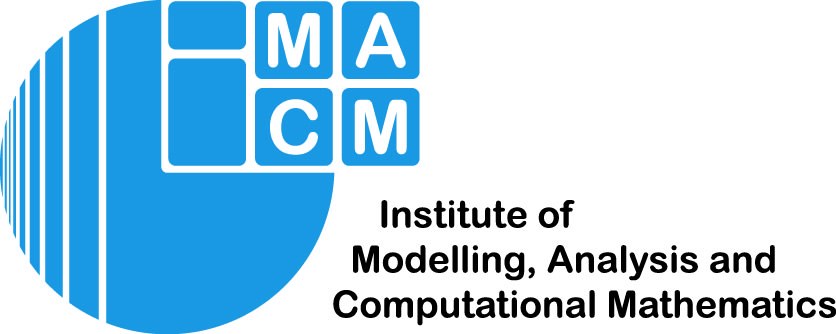}};
	\end{tikzpicture}

\begin{frontmatter}

\title{Actuarial Analysis of an Infectious Disease Insurance\\ based on an SEIARD Epidemiological Model}

\author[BUW,MAIS]{Achraf Zinihi}
\ead{a.zinihi@edu.umi.ac.ma} 

\author[BUW]{Matthias Ehrhardt\corref{Corr}}
\cortext[Corr]{Corresponding author}
\ead{ehrhardt@uni-wuppertal.de}

\author[MAIS]{Moulay Rchid Sidi Ammi}
\ead{rachidsidiammi@yahoo.fr}

\address[BUW]{University of Wuppertal, Applied and Computational Mathematics,\\
Gaußstrasse 20, 42119 Wuppertal, Germany}

\address[MAIS]{Department of Mathematics, AMNEA Group, Faculty of Sciences and Techniques,\\
Moulay Ismail University of Meknes, Errachidia 52000, Morocco}



\begin{abstract}
The growing number of infectious disease outbreaks, like the one caused by the SARS-CoV-2 virus, underscores the necessity of actuarial models that can adapt to epidemic-driven risks. Traditional life insurance frameworks often rely on static mortality assumptions that fail to capture the temporal and behavioral complexity of disease transmission. In this paper, we propose an integrated actuarial framework based on the SEIARD epidemiological model. This framework enables the explicit modeling of incubation periods and disease-induced mortality.\\
We derive key actuarial quantities, including the present value of annuity benefits, payment streams, and net premiums, based on SEIARD dynamics. We formulate a prospective reserve function and analyze its evolution throughout the course of an epidemic. Additionally, we examine the forces of infection, mortality, and removal to assess their impact on epidemic-adjusted survival probabilities. Numerical simulations implemented via a nonstandard finite difference (NSFD) scheme illustrate the model’s applicability under various parameter settings and insurance policy assumptions.
\end{abstract}

\begin{keyword}
Epidemic modeling \sep Epidemic insurance \sep SEIARD model
\sep NSFD scheme \sep Numerical simulations.

\textit{2020 Mathematics Subject Classification:} 92C60, 91G05, 33F05.
\end{keyword}

\journal{} 



\end{frontmatter}


\section{Introduction}\label{S1}
Mathematical modeling has become an essential tool for studying infectious disease dynamics. It offers a quantitative framework for understanding how epidemics spread, how they can be controlled, and how they behave in the long term.
One of the most widely used approaches is the compartmental model, which divides the population into epidemiological states, such as susceptible, exposed, infected, and recovered \cite{Yang2024, Simeonov2023, Zinihi2025S, Chang2022, Zinihi2025FDE, LuciaSanz2023}. 
First formalized in 1927 by Kermack and McKendrick \cite{kermack1927}, these models have since become the foundation of modern epidemiological theory.
These models are typically expressed as systems of ordinary differential equations (ODEs) and are used to estimate key epidemiological quantities such as the basic reproduction number $\mathcal{R}_0$ and the herd immunity threshold \cite{Azimaqin2025, Hethcote2000, Brauer2019}. 
The SEIRD model extends the classical SIR framework by incorporating a latency period through the exposed compartment and distinguishing between recovery and disease-induced mortality.
This makes it especially suitable for modeling diseases with delayed symptom onset and measurable fatality, such as SARS, Ebola, and, most recently, SARS-CoV-2 \cite{Giordano2020, Zhao2020}.

Meanwhile, actuarial science offers a rigorous mathematical basis for assessing and managing uncertain future events related to life insurance, pensions, and financial risk. Classical actuarial models rely heavily on life tables and survival models, in which constant mortality rates and independence between lives are standard assumptions \cite{Dickson2019}. 
These frameworks enable the pricing of life insurance products, the estimation of reserves, and the assessment of solvency using probability theory and financial mathematics techniques \cite{Kaas2008}.
However, large-scale health crises such as the pandemic caused by the SARS-CoV-2 virus have exposed significant limitations in these traditional models, especially when mortality rates fluctuate rapidly and systemic risk affects large populations simultaneously.
In such contexts, researchers have proposed integrating epidemic-aware compartmental models into actuarial analysis to enable more responsive premium structures, dynamic reserve adjustments, and real-time solvency management \cite{Feng2011, Hainaut2020}.

In a related study, Nkeki and Iroh \cite{Nkeki2024} developed an epidemiological model for designing communicable disease insurance policies. 
The model divided the population into five groups: susceptible individuals, exposed individuals, individuals who were infective and under treatment, deceased individuals, and recovered individuals who could rejoin the susceptible group and continue paying premiums. 
The infective group included active cases and deceased policyholders. 
The study focused on deriving the probability and cumulative distribution functions for policyholders' risk statuses. 
Using actuarial techniques, the authors computed various insurance-related quantities, including the financial obligations of the insurer and the insured.
In another study, \cite{Hainaut2020} proposed a pandemic risk model designed for the actuarial valuation of insurance products offering healthcare and death benefits. 
This model relied on a deterministic framework, which was presented as an efficient alternative to the classical SIR model. 
It was used to describe the early dynamics of the SARS-CoV-2 virus in several European countries, including Belgium, Germany, Italy, and Spain. 
The model permitted analytical tractability in calculating fair pure premiums. 
Two stochastic extensions were introduced: the first replaced calendar time with a gamma stochastic clock to capture uncertainty in the epidemic peak; 
the second incorporated Brownian motion and jump processes to account for erratic fluctuations and local resurgences in infection counts.
Similarly, \cite{Francis2023}  extended traditional multi-state models in life insurance to incorporate the effects of epidemic contagion. 
By modifying transition intensities within a Markov framework, the study captured infectious disease transmission at the individual level in a manner analogous to compartmental epidemiological models. 
This enabled a consistent actuarial evaluation of personal risk and insurance reserves under epidemic conditions.

\cite{Chatterjee2008} constructed a stochastic Markov model to simulate an individual's progression through health risk factors, such as obesity, diabetes, and hypertension, which can lead to ischemic heart disease or stroke.
\cite{Macdonald2005} extended this modeling approach to incorporate multiple critical illnesses and examined its implications for underwriting and premium rating in critical illness insurance. This included the effects of genetic predispositions and behavioral factors.
\cite{Nkeki2024Epi} proposed a compartmental model that explicitly captures infection, recovery, and reinfection. They applied actuarial techniques to evaluate insurance premiums, reserves, and financial obligations.
\cite{Feng2021}  provided a broader review of classical epidemiological models and their actuarial applications, including product design, epidemic reserving, and healthcare resource planning.
\cite{Chernov2021} studied premium pricing in SEIR models with migration and vaccination, highlighting how disease dynamics and public health interventions affect insurance costs.
\cite{Francis2023} extended multi-state Markov models to incorporate contagion effects at the individual level, enabling a more accurate valuation of reserves under epidemic conditions.
These studies represent only a subset of the growing body of literature at the intersection of epidemiological modeling and actuarial science. Interested readers may refer to the references therein and broader surveys such as \cite{Dickson2019, Feng2011, Hainaut2020, Nkeki2024, Francis2023, Macdonald2005, Nkeki2024Epi, Feng2021, Chernov2021}.

In this study, we examine a SEIARD compartmental model \cite{de2020seiard}, 
which is an extension of the classical SIR framework that incorporates asymptomatic infections and disease-induced mortality. 
The population is divided into six compartments: 
susceptible ($S$), exposed ($E$), symptomatic infected ($I$), asymptomatic infected ($A$), recovered ($R$), and deceased ($D$). 
This structure provides a more realistic representation of infectious disease dynamics, particularly in insurance modeling contexts where mortality and varying infection outcomes directly impact policyholder status and claims.
Including asymptomatic carriers and a separate mortality class enhances the model’s relevance for actuarial applications.
This enables the study of how epidemic progression impacts life insurance liabilities, premium calculations, and reserve dynamics.

By modeling the transitions that policyholders make between health states over time, we can quantify the likelihood that they will become infected, recover, or die due to the disease. We can also link these transitions to financial outcomes, such as claim payments, reserve accumulation, and premium adjustments. This modeling approach is particularly relevant for health and life insurance products because the occurrence of an epidemic introduces significant uncertainty in mortality rates and policyholder behavior. In this paper, we use the SEIARD model to derive net premium expressions, compute prospective reserves, and evaluate the sensitivity of actuarial quantities to key epidemiological parameters. Through this analysis, we aim to build on our previous work.

The remainder of this paper is organized as follows. 
Section~\ref{S2} introduces the SEIARD compartmental model, which is used to describe epidemic dynamics. This section also includes a well-posedness analysis to ensure the mathematical validity of the system. 
Section~\ref{S3} focuses on the model's actuarial applications, including deriving formulas for premiums and benefits, computing insurance reserves, and examining the effects of epidemic parameters, such as the force of infection, mortality and removal, on financial outcomes. 
A nonstandard finite difference (NSFD) scheme is introduced to discretize the proposed SEIARD model in Section~\ref{S4}.
Section~\ref{S5} presents numerical simulations that illustrate the actuarial implications of the model under various epidemic scenarios.
Finally, Section~\ref{S6} concludes the paper with a summary of the findings and potential directions for future research.

\section{SEIARD Model Description}\label{S2}
In this section, we present the mathematical framework used to model the transmission dynamics of an infectious disease, as well as its actuarial implications. We use a SEIARD compartmental model, in which the population is divided into six health states.

\subsection{Mathematical Model with Vital Dynamics}\label{S2.1}
We consider an SEIARD epidemiological model with vital dynamics to describe the evolution of an infectious disease in a structured population.
The model incorporates constant birth (or recruitment) at rate $\Lambda$, and natural death at rate $\mu$ across all living compartments.
The force of infection is driven by both symptomatic and asymptomatic individuals. 
The latter contribute at a reduced transmission rate controlled by the parameter $\kappa \in [0,1]$.
Upon exposure, individuals progress to the infectious phase at rate $\alpha$, with a proportion $p$ becoming symptomatic, while the remaining $1-p$ remain asymptomatic.
Both symptomatic and asymptomatic individuals may recover or die from the disease at respective rates.
The deceased class, D(t), accumulates disease-related deaths and is not included in the \textit{total living population} $N_L(t)$.

The transmission parameters employed in the SEIARD model are summarized in Table~\ref{Tab1}.

\begin{table}[H]
\centering
\setlength{\tabcolsep}{0.8cm}
\caption{Transmission parameters for the proposed SEIARD model.}\label{Tab1}
\adjustbox{max width=\textwidth}{
\begin{tabular}{cc}
\hline 
\textbf{Symbol} & \textbf{Description} \\
\hline \hline 
$\Lambda$ & Recruitment rate (e.g.\ birth or immigration) \\
\hline
$\beta$ & Transmission rate\\
\hline
$\mu$ & Natural death rate \\
\hline
$\kappa$ & Relative infectivity of asymptomatic individuals \\
\hline
$\alpha$ & Rate of progression from exposed to infected\\
\hline
$p$ & Proportion developing symptomatic infection\\
\hline
$\gamma_I$, $\gamma_A$ & Recovery rates for $I$ and $A$\\
\hline
$\delta_I$, $\delta_A$ & Disease-induced death rates from $I$ and $A$\\
\hline
\end{tabular}
}
\end{table}

Figure~\ref{F1} illustrates the compartmental structure of the SEIARD model, highlighting key transitions between health states, such as infection, recovery, and disease-induced death.

\begin{figure}[H]
\centering
\begin{tikzpicture}[node distance=3.5cm]
\node (S) [rectangle, draw, minimum size=1cm, fill=cyan!30] {S};
\node (E) [rectangle, draw, minimum size=1cm, fill=orange!40, right of=S] {E};
\node (I) [rectangle, draw, minimum size=1cm, fill=red!40, right of=E] {I};
\node (A) [rectangle, draw, minimum size=1cm, fill=red!30, above of=I, yshift=-1.5cm] {A};
\node (R) [rectangle, draw, minimum size=1cm, fill=green!30, right of=I] {R};
\node (D) [rectangle, draw, minimum size=1cm, fill=blue!30, right of=R] {D};
\draw [->, thick, >=latex, line width=1pt] (-1.5,0) -- ++(S) node[midway,above]{$\Lambda$};
\draw [->, thick, >=latex, line width=1pt] (S) -- (E) node[midway,above]{$\beta S \Bigl( \frac{I + \kappa A}{N_L} \Bigr)$};
\draw [->, thick, >=latex, line width=1pt] (E) -- (I) node[midway,above]{$\alpha p E$};
\draw [->, thick, >=latex, line width=1pt] (I) -- (R) node[midway,above]{$\gamma_I I$};
\draw [->, thick, >=latex, line width=1pt] (E.north) -- (3.5,2) -- (6.5,2) node[midway,above]{$\alpha (1-p) E$};
\draw [->, thick, >=latex, line width=1pt] (7.5,1.9) -- (10.5,1.9) node[midway,below]{$\gamma_A A$} -- (R.north);
\draw [->, thick, >=latex, line width=1pt] (7.5,2.1) -- (14,2.1) node[midway,above]{$\delta_A A$} -- (D.north);
\draw [->, thick, >=latex, line width=1pt] (7.1,-0.5) -- (7.1,-1.5) -- (14,-1.5) node[midway,above]{$\delta_I I$} -- (D.south);
\draw [->, thick, >=latex, line width=1pt] (S.south) -| (0,-1.3) node[near end,left]{$\mu S$};
\draw [->, thick, >=latex, line width=1pt] (E.south) -| (3.5,-1.3) node[near end,left]{$\mu E$};
\draw [->, thick, >=latex, line width=1pt] (A.south) -| (7,0.7) node[near end,left]{$\mu A$};
\draw [->, thick, >=latex, line width=1pt] (R.east) -- (12,0) node[midway,above]{$\mu A$};
\draw [->, thick, >=latex, line width=1pt] (6.9,-0.5) -- (6.9,-1.3) node[near end,left]{$\mu I$};
\end{tikzpicture}
\captionof{figure}{Transmission pathways in the proposed SEIARD model with vital dynamics.}\label{F1}
\end{figure}
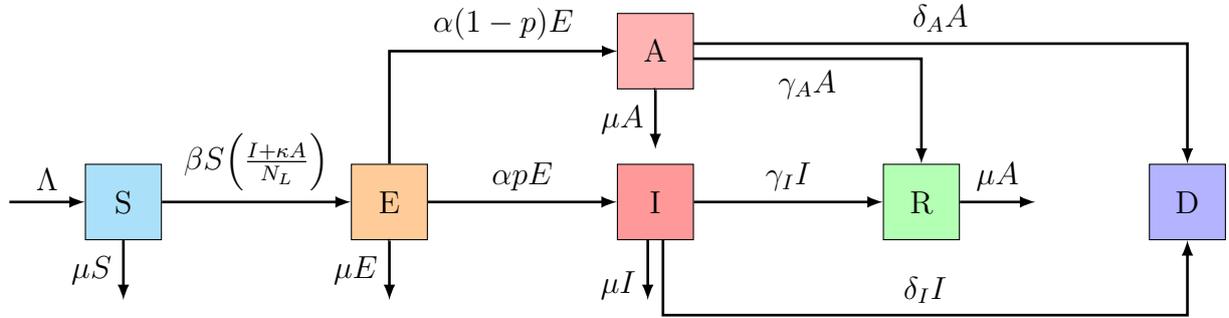

Let $N_L(t) = S(t) + E(t) + I(t) + A(t) + R(t)$ denote the \textit{total living population}. 
Then, the SEIARD model is given by the following system:

\begin{equation}\label{E2.1}
\left\{\begin{aligned}
\ot{S} &= \Lambda - \beta S \Bigl( \frac{I + \kappa A}{N_L} \Bigr) - \mu S, \\
\ot{E}&= \beta S \Bigl( \frac{I + \kappa A}{N_L} \Bigr) - (\alpha + \mu) E, \\
\ot{I} &= p\alpha E - (\gamma_I + \delta_I + \mu) I, \\
\ot{A}&= (1 - p)\alpha E - (\gamma_A + \delta_A + \mu) A, \\
\ot{R} &= \gamma_I I + \gamma_A A - \mu R, \\
\ot{D} &= \delta_I I + \delta_A A,
\end{aligned}\right.
\end{equation}
where
\begin{equation}\label{E2.2}
S(0) = S_0, \quad E(0) = E_0, \quad I(0) = I_0, \quad A(0) = A_0, \quad R(0) = R_0, \quad D(0) = D_0.
\end{equation}

From an actuarial perspective, these six compartments in \eqref {E2.1} play significantly different roles in an insurance model. 
First, people susceptible or exposed to infection during an epidemic ($S$, $E$) form a market share that could contribute premiums to an insurance fund in exchange for coverage of medical expenses if they become infected.
Next, during an epidemic outbreak, infected policyholders ($I$, $A$) benefit from claim payments to cover medical expenses provided by the insurance fund.
In our model, recovered persons ($R$) become immune and do not return to the susceptible class $S$. However, they continue to pay premiums to the insurance company.
After an insured person dies, i.e.\ transits to the compartment $D$, the beneficiaries they designated may receive a death benefit to cover funeral and burial expenses.
Note that once the insurance fund has been established, interest will accrue on the unpaid reserves at a fixed interest rate \cite{Feng2011}.
Figure~\ref{F2} summarizes the actuarial roles of the different compartments described above.

\begin{figure}[H]
\centering
\begin{tikzpicture}[node distance=3.5cm]
\node (I) [rectangle, draw, minimum size=1cm, fill=red!40] {I};
\node (A) [rectangle, draw, minimum size=1cm, fill=red!30, below of=I, yshift=1.5cm] {A};
\node (Ins) [rectangle, rounded corners, draw, minimum size=1cm, minimum width=1.5cm, fill=gray!20, right of=I, xshift=3cm, yshift=-1cm] {Insurance};
\node (E) [rectangle, draw, minimum size=1cm, fill=orange!40, right of=Ins, xshift=2cm] {E};
\node (S) [rectangle, draw, minimum size=1cm, fill=cyan!30, above of=E, yshift=-1.8cm] {S};
\node (R) [rectangle, draw, minimum size=1cm, fill=green!30, below of=E, yshift=1.8cm] {R};
\node (D) [rectangle, draw, minimum size=1cm, fill=blue!30, left of=Ins, xshift=0.4cm, yshift=2cm] {D};
\draw [-, thick, >=latex, line width=1pt] (E) -- (10.5,-1);
\draw [-, thick, >=latex, line width=1pt] (S) -| (11.2,0.7) -| (10.7,-1);
\draw [-, thick, >=latex, line width=1pt] (R) -| (11.2,-2.7) -| (10.7,-1);
\draw [->, thick, >=latex, line width=1pt] (10.5,-1) -- (Ins) node[midway,above]{\footnotesize Premium} node[midway,below]{\footnotesize Payment};
\draw [->, thick, >=latex, line width=1pt] (7.8,-1.2) -- (7.5,-1.2);
\draw [-, thick, >=latex, line width=1pt] (7.8,-1.2) -- (7.8,-1.9);
\draw [-, thick, >=latex, line width=1pt] (7.8,-1.9) -- (5.2,-1.9) node[midway,below]{\footnotesize Investment};
\draw [-, thick, >=latex, line width=1pt] (5.2,-1.9) -- (5.2,-1.2);
\draw [-, thick, >=latex, line width=1pt] (5.2,-1.2) -- (5.5,-1.2);
\draw [-, thick, >=latex, line width=1pt] (Ins) -- (4.5,-1);
\draw [-, thick, >=latex, line width=1pt] (4.5,-1) -- (4.5,1) node[yshift=-0.4cm,right]{\footnotesize Death} node[yshift=-0.8cm,right]{\footnotesize Benefit};
\draw [->,thick, >=latex, line width=1pt] (4.5,1) -- (D);
\draw [-, thick, >=latex, line width=1pt] (4.5,-1) -- (1.3,-1) node[midway,above]{\footnotesize Hospitalization} node[midway,below]{\footnotesize Benefit};
\draw [->, thick, >=latex, line width=1pt] (1.3,0) -- (I) ;
\draw [->, thick, >=latex, line width=1pt] (1.3,0) -| (1.3,-2) -- (A) ;
\end{tikzpicture}
\captionof{figure}{Insurance dynamics among compartments S, E, I, A, R, and D.}\label{F2}
\end{figure}
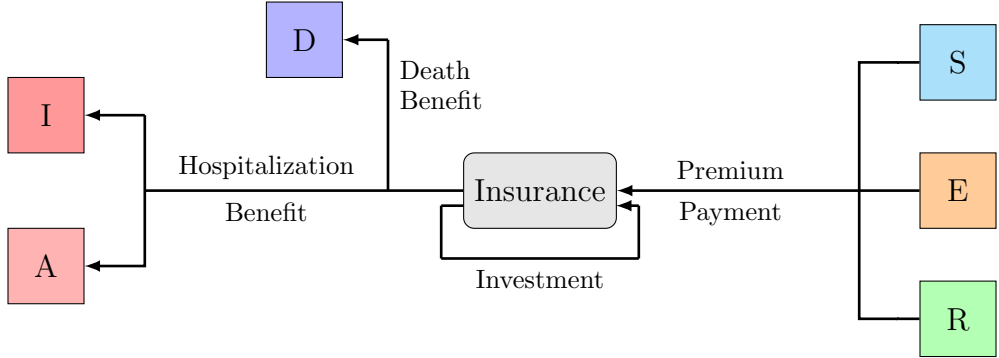

\subsection{Simplified Model without Vital Dynamics}\label{S2.3}

Although the model \eqref{E2.1} 
accounts for vital dynamics, many actuarial and epidemiological applications focus on short-term epidemic scenarios where demographic processes, such as birth and natural death, have a negligible influence compared to disease-induced effects. 
This assumption is particularly useful for analyzing insurance products affected by sudden epidemic shocks over short time periods.

In actuarial modeling, the main focus is often on quantifying the short-term impact of epidemic-driven mortality on life insurance reserves and capital requirements. During these short periods, background demographic changes occur slowly and have minimal influence on key financial indicators. Including vital dynamics requires additional parameters, such as birth and natural mortality rates, which are either unavailable or unnecessary for pricing and reserving during acute outbreaks.
For these reasons, the actuarial-epidemiological literature typically removes vital dynamics and focuses solely on the transmission and progression of the disease.
This simplification enhances mathematical tractability and allows for clearer interpretation of results relevant to insurance and risk management.

Therefore, we consider a 
simplified SEIARD model in which the recruitment ('birth') rate  $\Lambda$ and the natural death rate $\mu$ are both set to zero.
In this case, the \textit{total population}
\begin{equation*}
  N(t) = S(t) + E(t) + I(t) + A(t) + R(t) + D(t),    
\end{equation*}
remains constant because there are no births or non-disease-related deaths.
However, the \textit{total living population}
\begin{equation*}
   N_L(t) = S(t) + E(t) + I(t) + A(t) + R(t),
\end{equation*}
is not constant, since it decreases over time due to disease-induced mortality.
Therefore, under the assumptions described in Table~\ref{Tab1} and the compartmental structure illustrated in Figures~\ref{F1} and \ref{F2}, the 
\textit{simplified model} is given by
\begin{equation}\label{E2.3}
\left\{\begin{aligned}
\ot{S} &= - \beta S \Bigl( \frac{I + \kappa A}{N_L} \Bigr), \\
\ot{E}&= \beta S \Bigl( \frac{I + \kappa A}{N_L} \Bigr) - \alpha E, \\
\ot{I} &= p\alpha E - (\gamma_I + \delta_I) I, \\
\ot{A} &= (1 - p)\alpha E - (\gamma_A + \delta_A) A, \\
\ot{R} &= \gamma_I I + \gamma_A A, \\
\ot{D} &= \delta_I I + \delta_A A,
\end{aligned}\right.
\end{equation}
supplied with the initial data
\begin{equation}\label{E2.4}
S(0) = S_0, \quad E(0) = E_0, \quad I(0) = I_0, \quad A(0) = A_0, \quad R(0) = R_0, \quad D(0) = D_0.
\end{equation}

This formulation remains well-suited for capturing epidemic-induced mortality, the primary concern in actuarial analysis during outbreaks. 
We are now ready to proceed with the well-posedness analysis of the simplified model in the next section. 


\subsection{Well-posedness Analysis}\label{S2.3}
This section focuses on analyzing the existence and uniqueness of a positive solution to the system~\eqref{E2.3}--\eqref{E2.4}. To verify the invariance and boundedness of the proposed model, we consider the total population
$$
  \ot{N} = \ot{S} + \ot{E} + \ot{I} + \ot{A} + \ot{R} + \ot{D} \stackrel{\eqref{E2.3}}{=} 0.
$$
This implies that the total population $N$ remains constant over time. Therefore, all solutions to \eqref{E2.3}--\eqref{E2.4} are bounded.
Similarly, summing the equations associated with the total living population $N_L$, yields
$$
  \ot{N_L} = -\delta_I I - \delta_A A.
$$
When $I, A \geq 0$, this implies $\ot{N_L} \leq 0$, meaning that $N_L$ decreases over time due to disease-induced mortality. 
Since $N_{L_0} = S_0 + E_0 + I_0 + A_0 + R_0 > 0$, and deaths occur continuously rather than instantaneously, it follows that $N_L > 0$ for all finite $t \geq 0$.
If, however $\ot{N_L} \geq 0$, which mathematically possible but realistic, $N_L$ would still increase over time. However, it would still remain strictly positive for all finite times due to the positivity of the initial condition.
\begin{proposition}\label{P1}
For any nonnegative initial conditions and strictly positive model parameters, the solutions to the system~\eqref{E2.3}--\eqref{E2.4} remain nonnegative for all $t \geq 0$.
\end{proposition}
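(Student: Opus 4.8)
The plan is the classical three-step scheme for epidemiological ODE systems: (i) local existence and uniqueness, (ii) invariance of the nonnegative orthant via the quasi-positive structure of the vector field, and (iii) extension to a global-in-time solution from the boundedness supplied by the conserved total population.

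For step (i) I would note that the right-hand side $F$ of \eqref{E2.3} is $C^1$, hence locally Lipschitz, on the open set $\Omega=\{(S,E,I,A,R,D)\in\mathbb R^6: N_L:=S+E+I+A+R>0\}$. The initial datum \eqref{E2.4} is nonnegative with $N_{L_0}>0$, so it belongs to $\Omega$, and the Picard--Lindel\"of theorem gives a unique maximal solution on an interval $[0,T_{\max})$.

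Step (ii) is the core. Every equation of \eqref{E2.3} has the ``inflow minus proportional outflow'' form $\dot x_i=g_i(x)-h_i(x)\,x_i$ with $g_i,h_i\ge 0$ on $\mathbb R^6_{\ge 0}\cap\Omega$: for $S$, $g_S\equiv 0$ and $h_S=\beta(I+\kappa A)/N_L$; for $E$, $g_E=\beta S(I+\kappa A)/N_L$ and $h_E=\alpha$; for $I$ and $A$, $g=p\alpha E$ resp.\ $(1-p)\alpha E$ and $h=\gamma_I+\delta_I$ resp.\ $\gamma_A+\delta_A$; and for $R$ and $D$ the outflow vanishes while $g_R=\gamma_I I+\gamma_A A\ge 0$ and $g_D=\delta_I I+\delta_A A\ge 0$. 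Hence $F$ is quasi-positive: $F_i(x)\ge 0$ whenever $x\in\mathbb R^6_{\ge 0}$, $x_i=0$, and $N_L>0$, so the vector field is (weakly) inward on $\partial\mathbb R^6_{\ge 0}$. I would turn this into nonnegativity by one of the two standard devices --- either the integrating-factor representation $x_i(t)=x_i(0)e^{-\int_0^t h_i}+\int_0^t g_i(x(s))\,e^{-\int_s^t h_i\,d\tau}\,ds$ combined with a continuity bootstrap on the maximal subinterval of $[0,T_{\max})$ on which every coordinate stays nonnegative, or the perturbation $\dot x^\varepsilon=F(x^\varepsilon)+\varepsilon\mathbf 1$ (whose boundary flux is strictly positive, so $x^\varepsilon>0$) followed by $\varepsilon\downarrow 0$ and continuous dependence on compact subintervals of $[0,T_{\max})$.

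The step I expect to cause the only real difficulty is controlling the singular incidence term $1/N_L$: both devices above presuppose $N_L>0$ along the trajectory, and the natural estimate for that --- $\dot N_L=-\delta_I I-\delta_A A\ge-(\delta_I+\delta_A)N_L$, hence $N_L(t)\ge N_{L_0}e^{-(\delta_I+\delta_A)t}>0$ by Gr\"onwall --- itself requires $I,A\ge 0$. The way around this mild circularity is to carry out the nonnegativity argument and the Gr\"onwall bound together on the maximal interval where the solution stays in $\mathbb R^6_{\ge 0}$: there the bound keeps $N_L$ strictly positive, so the trajectory never reaches the singular set $\{N_L=0\}$, the quasi-positivity argument applies, and one then checks the interval cannot terminate strictly before $T_{\max}$. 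For step (iii), once nonnegativity is established, summing \eqref{E2.3} gives $\dot N=0$, so $0\le x_i(t)\le N(0)$ for all $i$; together with the lower bound on $N_L$ on finite intervals this confines the trajectory to a compact subset of $\Omega$, so the standard continuation criterion yields $T_{\max}=+\infty$, and the solution is nonnegative for all $t\ge 0$.
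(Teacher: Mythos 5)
Your proposal is correct, and it reaches the conclusion by a genuinely different (and more uniform) route than the paper. The paper's proof is a hybrid: for $S$ and $E$ it runs a negative-part energy estimate (multiplying the equation by $S^-=\max(0,-S)$, resp.\ $E^-$, and integrating the resulting differential inequality for $(S^-)^2$, $(E^-)^2$ to conclude these vanish), and only for $I,A,R,D$ does it invoke the boundary argument you use everywhere, namely checking $\dot I|_{I=0}\ge0$, $\dot A|_{A=0}\ge0$, $\dot R|_{R=0}\ge0$, $\dot D|_{D=0}\ge0$ and appealing to inward-pointing of the vector field on the coordinate hyperplanes. You instead observe the single structural fact that every component has the form $\dot x_i=g_i-h_i x_i$ with $g_i,h_i\ge0$ on the nonnegative orthant, i.e.\ quasi-positivity, and you convert this into invariance rigorously via the integrating-factor bootstrap or the $\varepsilon$-perturbation; this makes precise the step the paper leaves at the level of ``the vector field points inward, hence trajectories stay in the orthant.'' The most substantive difference is your explicit treatment of the singular incidence term: you identify that $N_L>0$ along the trajectory is needed before any positivity device applies, note the apparent circularity with $I,A\ge0$, and resolve it by running the Gr\"onwall bound $N_L(t)\ge N_{L_0}e^{-(\delta_I+\delta_A)t}$ jointly with the invariance argument on the maximal nonnegative interval. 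The paper addresses this point only informally (in the paragraph preceding the proposition, via ``deaths occur continuously rather than instantaneously''), so your version is tighter on exactly the step most likely to fail. You also fold in local existence and global continuation, which the paper defers to the discussion after the proposition and to Corollary~1; that ordering difference is immaterial. In short: same conclusion, a cleaner single mechanism in place of two separate ones, and a gap in the paper's treatment of $N_L>0$ that your argument closes.
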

\begin{proof}
First, we prove the positivity of the solution for each compartment to ensure the epidemiological relevance of the model. 
Consider the susceptible class $S$, which is governed by the following equation
\begin{equation*}
   \ot{S}= - \beta S \Bigl( \frac{I + \kappa A}{N_L} \Bigr).
\end{equation*}
We define the negative part of $S$ as $S^- = \max(0, -S)$. 
Multiplying both sides of the equation by $S^-$ and analyzing the resulting expression leads to 
\begin{equation*}
   \frac{1}{2} \ot{(S^-)^2} = \beta \Bigl( \frac{I(t) + \kappa A(t)}{N_L(t)} \Bigr) (S^-(t))^2.
\end{equation*}
Integrating this relation yields 
\begin{equation*}
   (S^-(t))^2 = (S^-_0)^2 \exp \biggl( \int_0^t 2\beta \,\frac{I(s) + \kappa A(s)}{N_L(s)}\, ds \biggr).
\end{equation*}
Since the initial condition satisfies $S_0 \geq 0$, we have $S^-_0 = 0$, implying $S^- = 0$ for all $t \geq 0$. Therefore, $S \geq 0$ for all $t \geq 0$. 

To prove that $E$ remains positive, we multiply its differential equation by its negative part, $E^-$.
Using the facts that $S\ge0$, $N_L > 0$, and that $S$, $E$, $I$, and $A$ are bounded, we obtain
\begin{equation*}
\frac{1}{2} \ot{}(E^-)^2 \leq -C (E^-)^2,
\end{equation*}
for some $C > 0$. Afterwards,
\begin{equation*}
   (E^-)^2 \leq (E^-_0)^2 \exp(-2Ct).
\end{equation*}
Since $E_0 \geq 0$, it follows that $E^-_0 = 0$. Therefore, $E \geq 0$ for all $t \geq 0$.

To demonstrate the nonnegativity of $I$, $A$, and $R$, $D$ two by two, consider the following
\begin{equation*}
  \ot{I}\Big|_{I=0} = \alpha p E \geq 0, \quad 
  \ot{A}\Big|_{A=0} = \alpha (1-p) E \geq 0,
\end{equation*}
and
\begin{equation*}
  \ot{R}\Big|_{R=0} = \gamma_I I + \gamma_A A \geq 0, \quad 
  \ot{D}\Big|_{D=0} = \delta_I I + \delta_A A \geq 0.
\end{equation*}
These derivatives indicate that the vector field on the boundaries of the positive orthant $\mathbb{R}_{+}^4$ points inward or is tangent to the boundary, i.e.\ the rates are nonnegative on the coordinate hyperplanes.
Therefore, any trajectory that starts in the interior of the positive orthant $\mathbb{R}_{+}^4$ remains in it for all future time. This ensures the positivity of the system's solutions.
\end{proof}

To prove the existence and uniqueness of solutions to the system of equations~\eqref{E2.3}--\eqref{E2.4}, we use standard results from the theory of ODEs, as supported by Proposition~\ref{P1}.
First, note that the right-hand side of the system is continuously differentiable and thus locally Lipschitz continuous in the positive orthant.
According to the Picard–Lindelöf theorem, this guarantees the local existence and uniqueness of solutions for given nonnegative initial conditions.
Furthermore, the positivity of the state variables $S$, $E$, $I$, $A$, $R,$ and $D$, together with the invariance of a bounded region, ensures that the solutions remain positive and bounded for all $t \geq 0$. This precludes finite-time blow-up and allows us to extend local solutions to global ones.
%
Therefore, the following corollary summarizes this subsection
\begin{corollary}\label{C1}
Given any positive initial data and strictly positive model parameters, system~\eqref{E2.3}--\eqref{E2.4} has a unique, global, positive, bounded solution for all $t\ge0$.
\end{corollary}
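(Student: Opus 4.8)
The plan is to assemble the statement from three ingredients that are, essentially, already in place: local well-posedness from the Picard--Lindel\"of theorem, the a priori bounds coming from the conservation law $\ot{N}=0$, and the nonnegativity of solutions established in Proposition~\ref{P1}. The only genuinely delicate point is that the right-hand side of \eqref{E2.3} is singular on the set $\{N_L=0\}$, so the argument must show that a maximal trajectory never reaches that set in finite time; everything else is bookkeeping.

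First I would fix the phase space. Let $\mathcal{D}=\{(S,E,I,A,R,D)\in\mathbb{R}^6 : N_L:=S+E+I+A+R>0\}$, which is open, and note that on $\mathcal{D}$ the vector field $F$ of \eqref{E2.3} is a ratio of polynomials with nonvanishing denominator, hence $F\in C^1(\mathcal{D})$ and in particular locally Lipschitz on $\mathcal{D}$. Since the initial datum \eqref{E2.4} has $N_{L_0}=S_0+E_0+I_0+A_0+R_0>0$, it lies in $\mathcal{D}$, so Picard--Lindel\"of yields a unique solution on a maximal interval $[0,T_{\max})$ with $0<T_{\max}\le\infty$, and by Proposition~\ref{P1} this solution is nonnegative throughout $[0,T_{\max})$.

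Next I would collect the a priori estimates on $[0,T_{\max})$. Summing the six equations gives $\ot{N}=0$, so $N(t)\equiv N_0$; together with componentwise nonnegativity this forces $0\le S,E,I,A,R,D\le N_0$, i.e.\ the orbit stays in a bounded set. It remains to bound $N_L$ away from zero: since $I+A\le N_L$,
\begin{equation*}
\ot{N_L} = -\delta_I I-\delta_A A \ge -\max(\delta_I,\delta_A)\,(I+A)\ge -\max(\delta_I,\delta_A)\,N_L,
\end{equation*}
so Gr\"onwall's inequality gives $N_L(t)\ge N_{L_0}\,e^{-\max(\delta_I,\delta_A)\,t}>0$ for every finite $t$. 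Hence, if $T_{\max}<\infty$, the entire orbit on $[0,T_{\max})$ lies in the compact set $\{0\le S,\dots,D\le N_0\}\cap\{N_L\ge N_{L_0}e^{-\max(\delta_I,\delta_A)T_{\max}}\}$, which is a compact subset of $\mathcal{D}$.

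Finally, I would invoke the standard continuation (escape) theorem: a maximal solution of a locally Lipschitz ODE on an open set must leave every compact subset of that set as $t\uparrow T_{\max}$ whenever $T_{\max}<\infty$. The previous paragraph shows this cannot happen, so $T_{\max}=\infty$ and the solution is global; positivity and boundedness for all $t\ge0$ then follow from Proposition~\ref{P1} and $N(t)\equiv N_0$, while uniqueness is already part of Picard--Lindel\"of. The main obstacle is precisely this lower bound on $N_L$: without it the vector field could in principle degenerate before $T_{\max}$, so the Gr\"onwall comparison for $N_L$ is exactly what makes the continuation argument go through.
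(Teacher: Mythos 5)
Your proof is correct, and its overall skeleton — Picard--Lindel\"of for local existence and uniqueness, nonnegativity from Proposition~\ref{P1}, boundedness from the conservation law $N(t)\equiv N_0$, and a continuation argument to rule out finite-time blow-up — is the same as the paper's. Where you genuinely add something is at the one delicate point the paper treats only informally: the vector field of \eqref{E2.3} is singular on $\{N_L=0\}$, and the paper's justification that $N_L$ stays positive ("deaths occur continuously rather than instantaneously") is heuristic. Your quantitative lower bound, obtained from $I+A\le N_L$ and Gr\"onwall,
\begin{equation*}
N_L(t)\;\ge\; N_{L_0}\,e^{-\max(\delta_I,\delta_A)\,t}\;>\;0,
\end{equation*}
turns this into a rigorous a priori estimate and is exactly what is needed to place the orbit in a compact subset of the open domain $\{N_L>0\}$ so that the escape lemma applies. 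Note also that you correctly set up the phase space as the open set where $N_L>0$ (rather than the full positive orthant, on whose boundary the right-hand side is not even defined), which makes the local Lipschitz claim precise. So: same route as the paper, but with the one nontrivial gap closed.
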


\section{Actuarial Applications}\label{S3}
The concept of insurance coverage for infectious diseases is similar to coverage for other risks, such as accidental death or property damage. However, it differs fundamentally from traditional property and casualty insurance because the number of policyholders paying premiums and the number of policyholders eligible for claims can fluctuate throughout an epidemic.

Since mortality analysis typically relies on ratios rather than absolute numbers, 
we define six deterministic functions $s(t)$, $e(t)$, $i(t)$, $a(t)$, $r(t)$, and $d(t)$,
that are obtained by
dividing equations~\eqref{E2.3} by the constant total population size $N$:
\begin{equation}\label{E3.1}
\left\{\begin{aligned}
\displaystyle \ot{s} &= - \beta \frac{N}{N_L} s (i + \kappa a), \\
\displaystyle \ot{e} &= \beta \frac{N}{N_L}  s (i + \kappa a) - \alpha e, \\
\displaystyle \ot{i} &= p\alpha e - (\gamma_I + \delta_I) i, \\
\displaystyle \ot{a} &= (1 - p)\alpha e - (\gamma_A + \delta_A) a, \\
\displaystyle \ot{r} &= \gamma_I i + \gamma_A a, \\
\displaystyle \ot{d} &= \delta_I i + \delta_A a,
\end{aligned}\right.
\end{equation}
supplied with the initial data
\begin{equation}\label{E3.2}
s(0) = s_0, \quad e(0) = e_0, \quad i(0) = i_0, \quad a(0) = a_0, \quad r(0) = r_0, \quad d(0) = d_0
\end{equation}
satisfying $s_0+e_0+i_0+a_0+r_0+d_0=1$. The components can be interpreted as the probabilities that a randomly selected individual belongs to one of the defined compartments at time $t$. 
However, the coupling terms introduce interdependence among the risks in the SEIARD model, which differs from the independent risk assumption typically found in multiple-decrement life insurance models.
With these probability functions defined, we can apply actuarial techniques to derive key quantities relevant to insurance coverage for infectious diseases. In the sequel we compute actuarial quantities as proposed by Feng and Garrido \cite{Feng2011} for the SEIARD model~\eqref{E3.1}.


\subsection{Premiums and Benefits}\label{S3.1}
We consider an infectious disease insurance plan in which premiums are continuously collected from individuals as long as they are susceptible. In other words, policyholders pay premiums in the form of continuous annuities for as long as they remain healthy and susceptible.
Conversely, the insurer continuously reimburses medical expenses for each infected individual throughout their treatment period. If an individual dies from the disease, coverage under the plan ends immediately.

Using International Actuarial Notation, we denote the \textit{actuarial present value} (APV) of premium payments over $t$ years by $\annuity{s+e}{t}$, where the superscript $s$ indicates that payments are made by susceptible and exposed individuals. 
Accordingly, the APV of benefit payments to infected individuals at a rate of one monetary unit per unit of time is denoted by $\annuity{s+e}{t}$.

To evaluate these annuities, we use the current payment approach. This method involves determining the present value of a payment due at time $t$, 
calculated as the discounted value of one monetary unit multiplied by the probability that the payment occurs.
We then integrate these present values over all possible times $t$. 
A comprehensive discussion of annuity evaluation techniques can be found in \cite{Bowers1997, Dickson2019}.

Accordingly, the insurer’s liability, the total discounted value of benefit payments over a $t$-year period, is expressed by the two infected classes as follows:
\begin{equation*}
    \annuity{i+a}{t} = \int_0^t \operatorname{e}^{-\delta \tau} \bigl(i(\tau)+a(\tau)\bigr) \,d\tau,
\end{equation*}
where $\delta\ge0$ represents the \textit{discounting interest rate}.
On the revenue side, the total discounted value of premium payments over the same $t$-year period is:
\begin{equation*}
    \annuity{s+e}{t}=\int_0^t \operatorname{e}^{-\delta \tau} \bigl(s(\tau)+e(\tau)\bigr)\,d\tau.
\end{equation*}

Our analysis is grounded in the \textit{fundamental principle of equivalence} for determining level premiums. According to this principle we have
\begin{equation*}
\mathbb{E}\bigl[\text{Present value of benefit outgo}\bigr] = \mathbb{E}\bigl[\text{Present value of premium income}\bigr].    
\end{equation*}
Based on this, the level premium for a unit annuity claim payment plan is given by:
\begin{equation*}
\bar{P}(\annuity{i+a}{t}) = \frac{\annuity{i+a}{t}}{\annuity{s+e}{t}},   
\end{equation*}
where $\annuity{i+a}{t}$ is the 
APV of benefit payments to the two infected classes $i(t)$ and $a(t)$ and $\annuity{s+e}{t}$ is the APV of premium payments from the susceptible and exposed classes.

For analytical tractability, we start with an insurance policy that has an infinite time horizon. When the policy duration is long, premiums computed under the assumption of an infinite term can reasonably approximate the true insurance cost.

\begin{proposition}
    Under the SEIARD model specified in equations~\eqref{E3.1}, assuming $\gamma_I= \gamma_A$, $\delta_I= \delta_A$, the following relation holds:
\begin{equation}\label{eq:prop1}
   \annuityinf{s+e} + \Bigl(1+\frac{\gamma_I + \delta_I}{\delta}\Bigr)\,\annuityinf{i+a}=
   \frac{1}{\delta}.
      \end{equation}
\end{proposition}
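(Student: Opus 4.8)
The plan is to collapse the six-dimensional system~\eqref{E3.1} into a single scalar identity for the two aggregates $y(t):=s(t)+e(t)$ (the premium‑paying classes) and $z(t):=i(t)+a(t)$ (the benefit‑receiving classes), and then apply the transform $x\mapsto\annuityinf{x}=\int_0^\infty\operatorname{e}^{-\delta\tau}x(\tau)\,d\tau$ that defines the actuarial present values. First I would add the first two equations of~\eqref{E3.1}: the infection terms cancel, leaving $\ot{y}=-\alpha e$. Next I would add the third and fourth equations; this is exactly where the hypothesis $\gamma_I=\gamma_A$, $\delta_I=\delta_A$ is used, since it lets $\gamma_I i+\gamma_A a+\delta_I i+\delta_A a$ collapse to $(\gamma_I+\delta_I)(i+a)$, giving $\ot{z}=\alpha e-(\gamma_I+\delta_I)z$. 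Adding these two relations eliminates $\alpha e$ as well and produces the conservation‑type law
\begin{equation*}
   \ot{}\bigl(y+z\bigr) = -(\gamma_I+\delta_I)\,z .
\end{equation*}

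The second step is to multiply this identity by $\operatorname{e}^{-\delta\tau}$ and integrate over $[0,\infty)$. By Corollary~\ref{C1} the solution is global and bounded, so all the improper integrals converge (bounded integrand against $\operatorname{e}^{-\delta\tau}$ with $\delta>0$), and in the integration by parts the boundary term $\bigl[\operatorname{e}^{-\delta\tau}(y+z)\bigr]_0^\infty$ equals $-(y_0+z_0)$. Hence
\begin{equation*}
   \delta\bigl(\annuityinf{s+e}+\annuityinf{i+a}\bigr)-(y_0+z_0) = -(\gamma_I+\delta_I)\,\annuityinf{i+a}.
\end{equation*}
Using the initial normalization $y_0+z_0=s_0+e_0+i_0+a_0=1$ — i.e.\ at inception the whole population sits in the premium‑ and benefit‑bearing classes, $r_0=d_0=0$ — then collecting the $\annuityinf{i+a}$ terms and dividing through by $\delta$ yields exactly~\eqref{eq:prop1}.

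I do not expect a real obstacle: the argument is a short algebraic reduction followed by one Laplace‑type integration. The two points that need care are (i) making sure the equal‑rate hypothesis is invoked in precisely the right place, namely to merge the two removal terms into $(\gamma_I+\delta_I)z$, and (ii) justifying that the boundary term vanishes at infinity and that the integrals are finite, which both follow at once from the global boundedness in Corollary~\ref{C1} together with $\delta>0$ (note $y(t)$ need not tend to $0$, but $\operatorname{e}^{-\delta\tau}y(\tau)\to0$ regardless). It is also worth flagging the implicit assumption $r_0=d_0=0$: without it the right‑hand side of~\eqref{eq:prop1} would read $(s_0+e_0+i_0+a_0)/\delta$ instead of $1/\delta$.
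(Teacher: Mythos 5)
Your proof is correct and follows essentially the same route as the paper: sum the first four equations of~\eqref{E3.1}, invoke $\gamma_I=\gamma_A$, $\delta_I=\delta_A$ to collapse the removal terms into $(\gamma_I+\delta_I)(i+a)$, and apply the discounting integral --- your integration by parts on $\ot{}(y+z)$ is just the paper's integrate-then-interchange (Fubini) step carried out in the opposite order. Your explicit remark that the normalization $s_0+e_0+i_0+a_0=1$ (i.e.\ $r_0=d_0=0$) is what is actually needed is a useful clarification of the paper's stated assumption $s_0+e_0=1$.
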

\begin{proof}
    Adding the first four equations in \eqref{E3.1} and integrating yields (with $s_0+e_0=1$)
    \begin{equation*}
    s(t)+e(t) -1 + i(t)+a(t) =  -(\gamma_I + \delta_I) \int_0^t i(\tau)\,d\tau
     -(\gamma_A + \delta_A) \int_0^t a(\tau)\,d\tau.
      \end{equation*}
     Next,  we assume $\gamma_I= \gamma_A$, $\delta_I= \delta_A$, we multiply by $\operatorname{e}^{-\delta t}$ and integrate from 0 to $\infty$:
     \begin{multline*}
   \int_0^\infty \operatorname{e}^{-\delta t} (s(t)+e(t))\,dt  + \int_0^\infty \operatorname{e}^{-\delta t} (i(t)+a(t)\,dt  -\frac{1}{\delta} \\
   =  - (\gamma_I+\delta_I)\int_0^\infty \operatorname{e}^{-\delta t}\int_0^t (i(\tau)+a(\tau))\,d\tau\,dt, 
      \end{multline*}
      Interchanging the order of integrals yields, 
       \begin{equation*}
       \begin{split}
   \annuityinf{s+e}
   +\annuityinf{i+a}
   -\frac{1}{\delta} &=  -  (\gamma_I+\delta_I) \int_0^\infty \operatorname{e}^{-\delta t} \int_0^t (i(\tau)+a(\tau))\,d\tau\,dt\\
   &=  -\frac{\gamma_I+\delta_I}{\delta}
   \int_0^\infty \operatorname{e}^{-\delta \tau}  (i(\tau)+a(\tau))\,d\tau
   =-\frac{\gamma_I+\delta_I}{\delta}\,\annuityinf{i+a},
    \end{split}
      \end{equation*}
      which proves~\eqref{eq:prop1}.
\end{proof}

Note that the right-hand side of \eqref{eq:prop1} represents the present value of a unit perpetual annuity at a constant interest rate $\delta$.
The left-hand side captures the distribution of payments across compartments, namely
(i) the APV of continuous premium payments by susceptibles or exposed is given by $\annuityinf{s+e}$,
(ii) the APV of continuous benefit payments to infected 
individuals is $\annuityinf{i+a}$,
  (iii) the APV of the perpetuities granted to individuals transitioning from class $i$ or $a$ to class $r$ or $d$ is $((\gamma_I+\delta_I)/\delta)\annuityinf{s+e}$, where each individual in class $r$ is entitled to a perpetuity worth $1/\delta$ at the time of transition.
Thus, the total APV of all six components equals that of a unit perpetuity, ensuring fairness and consistency under the equivalence principle.

Using equation~\eqref{eq:prop1}, we can derive the net level premium for an infinite-term policy, where premiums and claims are structured as continuous annuity payments. The formula is given by
\begin{equation*}
\bar{P}(\annuityinf{i+a}) = \frac{\annuityinf{i+a}}{\annuityinf{s+e}} 
=\frac{\delta \annuityinf{i+a}}{1-(\delta+\gamma_I+\delta_I)\annuityinf{s+e}}.
\end{equation*}



\subsection{Reserve Calculation}\label{S3.2}
In the context of epidemic-linked insurance products, a key actuarial quantity is the prospective reserve. This is defined as the insurer's net liability at time $t$, taking into account the expected present value of future benefits and premiums.
The SEIARD model structure natu\-rally supports this formulation by identifying subpopulations that are either responsible for generating claims (infectious classes $i$ and $a$, as well as deaths in $d$) or contributing premiums (susceptible $s$, exposed $e$, and recovered $r$).
Let $\pi$ be the premium rate per person per unit time, $b_I$ and $b_A$ be the per-unit-time benefits paid to symptomatic and asymptomatic infectives, respectively, $L_D$ be the lump-sum death benefit paid upon death, and $\delta$ be the discount rate.
We define the prospective reserve at time $t$, denoted $V$, as
$$
   V(t) = \text{APV of future benefits to } i, a, d - \text{APV of future premiums from } s, e, r.
$$
Thus, the reserve at time $t$, is given by 
\begin{equation}
\begin{aligned}\label{eq:reserve}
   V(t) &= \int_t^T e^{-\delta (x - t)} \bigl[ b_I i(x) + b_A a(x) + L_D ( \delta_I i(x) + \delta_A a(x) ) \bigr] \, dx \\
   &\qquad - \pi \int_t^T e^{-\delta (x - t)} \bigl[ s(x) + e(x) + r(x) \bigr] \, dx.
\end{aligned}
\end{equation}
This expression~\eqref{eq:reserve} captures the evolving nature of the epidemic, enabling a dynamic assessment of insurer liabilities.
For example, a sharp increase in $i(t)$ or $\delta_I$ leads to an increase in expected claims and, consequently, a higher reserve requirement.
Conversely, including $r(t)$ as a premium-paying class stabilizes income since recovered individuals are assumed to maintain coverage.
By analyzing $V$ under various epidemic scenarios, insurers can evaluate solvency margins, assess the robustness of pricing strategies, and design more resilient products.

In actuarial models of epidemic insurance, the \textit{premium rate} $\pi$  plays a central role in determining the financial soundness of the insurance scheme. 
It represents the continuous income collected from insured individuals to finance future claim payments. 
From a modeling perspective, $\pi$ is a free parameter. 
However, for the reserve function $V$ to remain meaningful in actuarial terms, $\pi$ must be strictly positive. 
A negative or zero premium rate implies that the insurer is offering coverage for free or subsidizing the insured, which is unrealistic. Furthermore, the premium rate must be carefully calibrated to ensure that the reserve function remains nonnegative throughout the insurance horizon $[0, T]$, reflecting the requirement that the insurer remains solvent at all times.\\
If $\pi$ is set too low, the reserve $V$ may become negative during the simulation period. This indicates that the scheme is underfunded and incapable of covering its expected liabilities.
Conversely, if $\pi$ is set too high, the reserve may exceed the expected claims by a large margin, reflecting overpricing or inefficiency.\\
To determine an admissible value of $\pi$ for a $T$-day insurance policy, the rate should be chosen such that $V(t) \geq 0$ for all $t \in [0, T]$. This leads to the actuarial constraint
\begin{equation*}
0 < \pi \leq \pi^* := \min_{t \in [0, T]} \frac{\int_t^T e^{-\delta (x - t)} \bigl[ b_I i(x) + b_A a(x) + L_D ( \delta_I i(x) + \delta_A a(x) ) \bigr]\,dx}{\int_t^T e^{-\delta (x - t)} \bigl[ s(x) + e(x) + r(x) \bigr]\,dx}.
\end{equation*}
This inequality ensures that the present value of future premium income never falls below the present value of future liabilities.
It also establishes a practical upper bound for $\pi$, which can be approximated numerically based on model trajectories.

\subsection{Force of Infection, Mortality, and Removal}\label{S3.3}
The concept of the force of infection plays a pivotal role in linking epidemic dynamics with actuarial modeling. Within the SEIARD framework, we consider two formulations of this quantity that offer complementary perspectives. 
The first is the mechanistic force of infection, denoted $\lambda(t)$, which captures the transmission rate from infectious individuals to susceptibles based on model parameters and compartment sizes. It is defined by
\begin{equation}\label{ForInfII}
  \lambda(t) = \beta \frac{i(t) + \kappa a(t)}{N_L(t)},
\end{equation}
and represents the instantaneous hazard rate at which susceptibles acquire infection.
In contrast, we also define an empirical force of infection, denoted $\mu^{s+e}_t$, which describes the observed rate of depletion of the combined susceptible and exposed classes
\begin{equation}\label{ForInfI}
\mu^{s+e}_t = -\frac{s'(t)+e'(t)}{s(t)+e(t)}, \quad t\ge0.
\end{equation}
While $\lambda(t)$ arises from the model's transmission mechanisms, $\mu^{s+e}_t$ can be seen as an effective or realized force of infection, incorporating all processes contributing to the decline of $s(t) + e(t)$, including new infections and progression to infectious states. Comparing both expressions allows us to validate the model’s internal consistency and examine differences between theoretical assumptions and actual epidemic dynamics.

From an actuarial standpoint, the mechanistic force $\lambda(t)$ governs the flow of individuals from $s(t)$ to $e(t)$, which in turn influences future claims and premium base dynamics. For instance, higher values of $\lambda(t)$ lead to increased exposure and eventually higher numbers in the $i(t)$ and $a(t)$ compartments, thus amplifying benefit payouts.
Moreover, $\lambda(t)$ allows the computation of epidemic-adjusted survival probabilities for individuals initially in the susceptible class. The probability that an individual avoids infection up to time $t$ is given by
\begin{equation}\label{ProbForInfII}
  p_s(t) = \exp\Bigl( - \int_0^t \lambda(x)\, dx \Bigr),
\end{equation}
which can be used in pricing life or health insurance contracts under epidemic conditions. The behavior of $\lambda(t)$ over time can also inform public health-linked underwriting policies, such as temporary exclusions or adjusted premiums during high-transmission periods.
Thus, incorporating the force of infection provides a rigorous basis for modeling the impact of disease transmission on claim frequency and insurance fund volatility.

In parallel with the force of infection, we distinguish between two complementary notions capturing the impact of disease-related mortality in the SEIARD system~\eqref{E3.1}. The first is the force of mortality, denoted $\mu^d(t)$, which quantifies the instantaneous death rate from both symptomatic and asymptomatic infections. It is given by the mechanistic expression
\begin{equation}\label{ForMort}
   \mu^d(t) = \frac{\delta_I i(t) + \delta_A a(t)}{N_L(t)},
\end{equation}
and governs the flow of individuals from the living population to the deceased class $d(t)$. This mortality force directly influences the severity of insurance claims and acts as a dynamic hazard function that adjusts classical survival models to epidemic conditions. The epidemic-adjusted survival probability is then defined as
\begin{equation}\label{ProbForMort}
   p_l(t) = \exp\Bigl( - \int_0^t \mu^d(x)\, dx \Bigr),
\end{equation}
representing the likelihood that an individual survives the epidemic up to time $t$. This formulation is essential for pricing annuities, valuing contingent liabilities, and computing reserves under mortality shocks.

Complementing this, we introduce the empirical force of removal, denoted $\mu^{i+a}_t$, which captures the effective rate at which individuals exit the infectious classes due to recovery or death
\begin{equation}\label{ForRemov}
\mu^{i+a}_t = -\frac{i'(t)+a'(t)}{i(t)+a(t)}, \quad t\ge0.
\end{equation}
From the system dynamics, this expression expands to
\begin{equation*}
\mu^{i+a}_t = \frac{(\gamma_I + \delta_I) i(t) + (\gamma_A + \delta_A) a(t) - \alpha e(t)}{i(t) + a(t)},
\end{equation*}
showing how removals result from both mortality and recovery processes, as well as the progression from the exposed state. While $\mu^d(t)$ isolates disease-induced mortality, $\mu^{i+a}_t$ accounts for the total outflow from the infectious compartments, thus providing an empirical benchmark against which mechanistic assumptions can be compared.

More generally, these force-based formulations yield integral representations of compartment sizes. For instance, the susceptible class satisfies
\begin{equation*}
   s(t)=s_0\exp\Bigl\{ -\int_0^t \mu^s_\tau\,d\tau\Bigr\}
   =s_0\exp\Bigl\{ -\beta \int_0^t \frac{N}{N_L(\tau)}(i(\tau) + \kappa a(\tau)) \,d\tau\Bigr\},
\end{equation*}
and analogous relations hold for the other classes. This highlights the structural role played by these hazard rates in shaping population trajectories, and offers a natural bridge to actuarial measures such as survival probabilities and expected times to death or recovery.

\section{Numerical Scheme}\label{S4}
We employ the nonstandard finite difference (NSFD) method introduced by Mickens \cite{Mickens1993} to approximate the SEIARD model~\eqref{E3.1} numerically.
This method is particularly suitable for epidemic models because it is designed to preserve the continuous system's key qualitative properties, such as positivity, boundedness, and dynamic consistency.
These properties are often violated by standard numerical schemes, such as the explicit Euler method or the classical Runge-Kutta methods \cite{Zinihi2025NSFD, Ehrhardt2013, Costa2024, Maamar2023}.

Although standard schemes are widely used, they can produce nonphysical artifacts, such as negative compartment sizes or spurious equilibria, especially when applied to nonlinear systems with stiff dynamics or conservation constraints.
These issues are particularly problematic in epidemiological and actuarial settings where state variables (e.g., infection ratios and mortality proportions) must be interpreted in a biologically and financially meaningful way.
For instance, as illustrated in \cite{Zinihi2025NSFD}, even in reaction-diffusion settings, the traditional finite difference method may result in negative population sizes for exposed individuals under specific conditions. 
In contrast, the NSFD method consistently maintains positivity and qualitative realism.

In the NSFD framework, the time derivative $\dot{u}(t)$ of a generic state variable $u$ is approximated using a nontrivial denominator function $\varphi(k)$, where $k=\Delta t$ is the time step
\begin{equation*}
  \ot{u}\bigg|_{t=t_n} \approx \frac{u^{n+1} - u^n}{\varphi(k)},
\end{equation*}
with $\varphi(k) > 0$ and $\varphi(k) = k + \mathcal{O}(k^2)$. One common choice that guarantees positivity preservation is $\varphi(k) = \frac{e^{\mu k} - 1}{\mu}$, where $\mu$ is the natural mortality rate, as discussed in \cite{Zinihi2025NSFD}.

Let $k$ be the time step size, and denote $s^n \approx s(t_n)$, $e^n \approx e(t_n)$, and so on. 
The NSFD discretization of the SEIARD model~\eqref{E3.1} becomes
\begin{equation}\label{E2.5scheme} 
\left\{\begin{aligned}
\displaystyle \frac{s^{n+1} - s^n}{\varphi(k)} &= - \beta \frac{N}{N_L^n} s^{n+1} (i^n + \kappa a^n), \\
\displaystyle \frac{e^{n+1} - e^n}{\varphi(k)} &= \beta \frac{N}{N_L^n} s^{n+1} (i^n + \kappa a^n) - \alpha e^{n+1}, \\
\displaystyle \frac{i^{n+1} - i^n}{\varphi(k)} &= p\alpha e^{n+1} - (\gamma_I + \delta_I) i^{n+1}, \\
\displaystyle \frac{a^{n+1} - a^n}{\varphi(k)} &= (1 - p)\alpha e^{n+1} - (\gamma_A + \delta_A) a^{n+1}, \\
\displaystyle \frac{r^{n+1} - r^n}{\varphi(k)} &= \gamma_I i^{n+1} + \gamma_A a^{n+1}, \\
\displaystyle \frac{d^{n+1} - d^n}{\varphi(k)} &= \delta_I i^{n+1} + \delta_A a^{n+1}.
\end{aligned}\right.
\end{equation}
This scheme follows Mickens’s rules \cite{Mickens1993};
\begin{itemize}
\item Nonlinear terms such as $s(i + \kappa a)$ are discretized non-locally, i.e., using a mix of time levels (e.g., $s^{n+1}(i^n + \kappa a^n)$) to maintain positivity;

\item The discrete derivative uses a nonlinear denominator function $\varphi(k)$ that reflects the asymptotic behavior of the system;

\item The scheme is explicitly solvable in a sequential manner, with each variable updated in order.
\end{itemize}
Afterwords,
\begin{equation}\label{NSFD}
\left\{\begin{aligned}
s^{n+1} &= \dfrac{s^n}{1 + \beta \frac{N}{N_L^n} (i^n + \kappa a^n) \varphi(k)}, \\
e^{n+1} &= \dfrac{e^n  + \beta \frac{N}{N_L^n} s^{n+1} (i^n + \kappa a^n) \varphi(k)}{1 + \alpha\varphi(k)}, \\
i^{n+1} &= \dfrac{i^n + p \alpha e^{n+1} \varphi(k) }{1 + (\gamma_I + \delta_I)\varphi(k)}, \\
a^{n+1} &= \dfrac{a^n + \alpha (1 - p) e^{n+1} \varphi(k) }{1 + (\gamma_A + \delta_A) \varphi(k)}, \\
r^{n+1} &= r^n + \varphi(k) (\gamma_I i^{n+1} + \gamma_A a^{n+1}), \\
d^{n+1} &= d^n + \varphi(k) (\delta_I i^{n+1} + \delta_A a^{n+1}).
\end{aligned}\right.
\end{equation}

Let us briefly comment on the discretization of the nonlinear incidence terms, which are bilinear in nature. In particular, the infection term $\beta \tfrac{N}{N_L(t)} s(t) (i(t) + \kappa a(t))$ is discretized in the first two equations of \eqref{NSFD} as $\beta \tfrac{N}{N_L^n} s^{n+1} (i^n + \kappa a^n)$, rather than $\beta \tfrac{N}{N_L^n} s^n (i^n + \kappa a^n)$ or $\beta \tfrac{N}{N_L^{n+1}} s^{n+1} (i^{n+1} + \kappa a^{n+1})$. 
The guiding principle here is to evaluate exactly one factor at the new time level, specifically, the variable whose time derivative appears in the equation (in this case, $s$ or $e$). 
This semi-implicit treatment ensures the positivity of the numerical solution while maintaining the explicit solvability of the scheme. 
This strategy aligns with the general rules of NSFD methods and has been successfully applied in similar epidemiological contexts (see e.g., \cite{Ehrhardt2013, Mickens1993, Zinihi2025NSFD}).

By construction, this NSFD scheme ensures that all compartments $s^n, e^n, i^n, a^n, r^n$, and $d^n$ remain non-negative and the discrete total normalized population 
$s^n+e^n+i^n+a^n+r^n+d^n$
remains constant at 1 (simply add all equations in \eqref{E2.5scheme}), provided that the initial data satisfy this normalization condition. 
These qualitative properties of the NSFD scheme \eqref{NSFD} also ensure its stability.
In summary, the NSFD method provides a structure-preserving and dynamically consistent approach to numerically solving epidemiological models such as \eqref{E3.1}. 
Its robustness, especially in maintaining positivity and avoiding nonphysical behaviors, makes it a compelling choice for actuarial analyses and public health policy modeling.

\section{Numerical Results}\label{S5}
In this section, we present numerical results that illustrate the dynamic behavior of the SEIARD model~\eqref{E3.1}--\eqref{E3.2} using the NSFD scheme~\eqref{NSFD}. 
Our goal is to simulate the temporal evolution of the epidemiological compartments under realistic conditions and evaluate the model’s capacity to generate plausible outbreak scenarios.
The analysis is divided into two parts. First, we describe the parameter values and initial conditions used in the simulations based on empirical data from a real-world SARS-CoV-2 outbreak. Second, we apply the proposed NSFD scheme to explore population dynamics.
Additionally, we examine the evolution of the force of infection, mortality, and the actuarial reserve, which are key components of the model's financial interpretation.
The simulations confirm the consistency, positivity, and stability of the numerical scheme and demonstrate its relevance to epidemiological modeling and actuarial risk assessment.

\subsection{Parameter Estimate and Initial Conditions}\label{S5.1}
We calibrate the model parameters and initial conditions of the SEIARD model~\eqref{E3.1}--\eqref{E3.2} based on data from the early phase of the 2020 Mexican SARS-CoV-2 epidemic \cite{Github2020, Conacyt2020} to perform realistic numerical simulations.
Specifically, we refer to the study \cite{AvilaPoncedeLen2020}, which fitted an SEIARD-type model to \textit{real case counts} from March to July 2020.
These estimates capture essential epidemiological characteristics of the disease dynamics during the initial wave and are consistent with values reported in the literature across several settings.
The following Table~\ref{Tab2} summarizes the parameter values used in the numerical scheme~\eqref{NSFD}.

\begin{table}[H]
\centering
\setlength{\tabcolsep}{0.8cm}
\caption{Estimated parameter values based on COVID-19 data from Mexico \cite{Github2020, Conacyt2020}.}\label{Tab2}
\adjustbox{max width=\textwidth}{
\begin{tabular}{cc||cc}
\hline 
\textbf{Parameter} & \textbf{Value} & \textbf{Parameter} & \textbf{Value}\\
\hline \hline 
$\beta$ & 0.3 &  $\gamma_I$ & 0.2 \\
\hline
$\kappa$ & 0.7 & $\delta_I$ & 0.007 \\
\hline
$\alpha$ & 0.192 & $\gamma_A$ & 0.1 \\
\hline
$p$ & 0.14 & $\delta_A$ & 0.001 \\
\hline
\end{tabular}
}
\end{table}

These values reflect a disease with a relatively high transmission rate, a moderate incubation period, and a progression that differs between symptomatic and asymptomatic cases.
Notably, only a small fraction of exposed individuals become symptomatic ($p = 0.14$), while the rest remain asymptomatic, which is consistent with early clinical findings from the surveillance of the SARS-CoV-2 virus.

For the initial conditions, we assume the following normalized values representing the early stage of an outbreak in a population of size $N$, where each compartment is expressed as a proportion of the total population
\begin{equation*}
   s(0) = 0.9999, \ e(0) = 0.00005, \ i(0) = 0.00003, \ a(0) = 0.00002, \ r(0) = 0, \ d(0) = 0.
\end{equation*}

These values correspond to a scenario in which nearly the entire population is susceptible and there are only a few initial cases that are either exposed, symptomatic, or asymptomatic.
This setup is ideal for analyzing the initial spread and early intervention strategies, particularly in the context of epidemic insurance planning or actuarial risk evaluation.

Unless stated otherwise, we assume that the total living population $N_L$ is approximately equal to $N$, and that there is negligible demographic turnover during the short-to-medium simulation horizon 
(e.g., 200 days in \cite{AvilaPoncedeLen2020}).
The time step $k$ used in the NSFD scheme~\eqref{NSFD} is set to $k = 1$ day, 
and the denominator function is chosen as $\varphi(k) = \frac{e^{\mu k} - 1}{\mu}$,
with $\mu \to 0$ (i.e., no natural mortality), simplifying to $\varphi(k) = k$.

\subsection{Simulations}\label{S5.2}
Figure~\ref{F3} shows the 
evolution of each compartment in the SEIARD epidemiological model, 
capturing the 
progression of the outbreak. The proportion of susceptible individuals, $s(t)$, declines sharply once the epidemic begins and eventually stabilizes at a lower level, reflecting significant population-level exposure to the disease. 
The exposed class $e(t)$ displays a typical rise-and-fall profile, peaking around day 150 as individuals transition to the infectious stages. 
The symptomatic infected class $i(t)$ exhibits a relatively low peak, suggesting either reduced transmission from symptomatic individuals or more rapid recovery or mortality.
In contrast, the asymptomatic infected class $a(t)$ reaches a much higher peak and dominates the transmission dynamics.
The recovered population $r(t)$ increases steadily and ultimately becomes the largest compartment, indicating broad population recovery or immunity. Meanwhile, the deceased class $d(t)$ grows gradually and levels off, representing the cumulative impact of mortality.
\begin{figure}[H]
\centering
\includegraphics[width=0.96\textwidth]{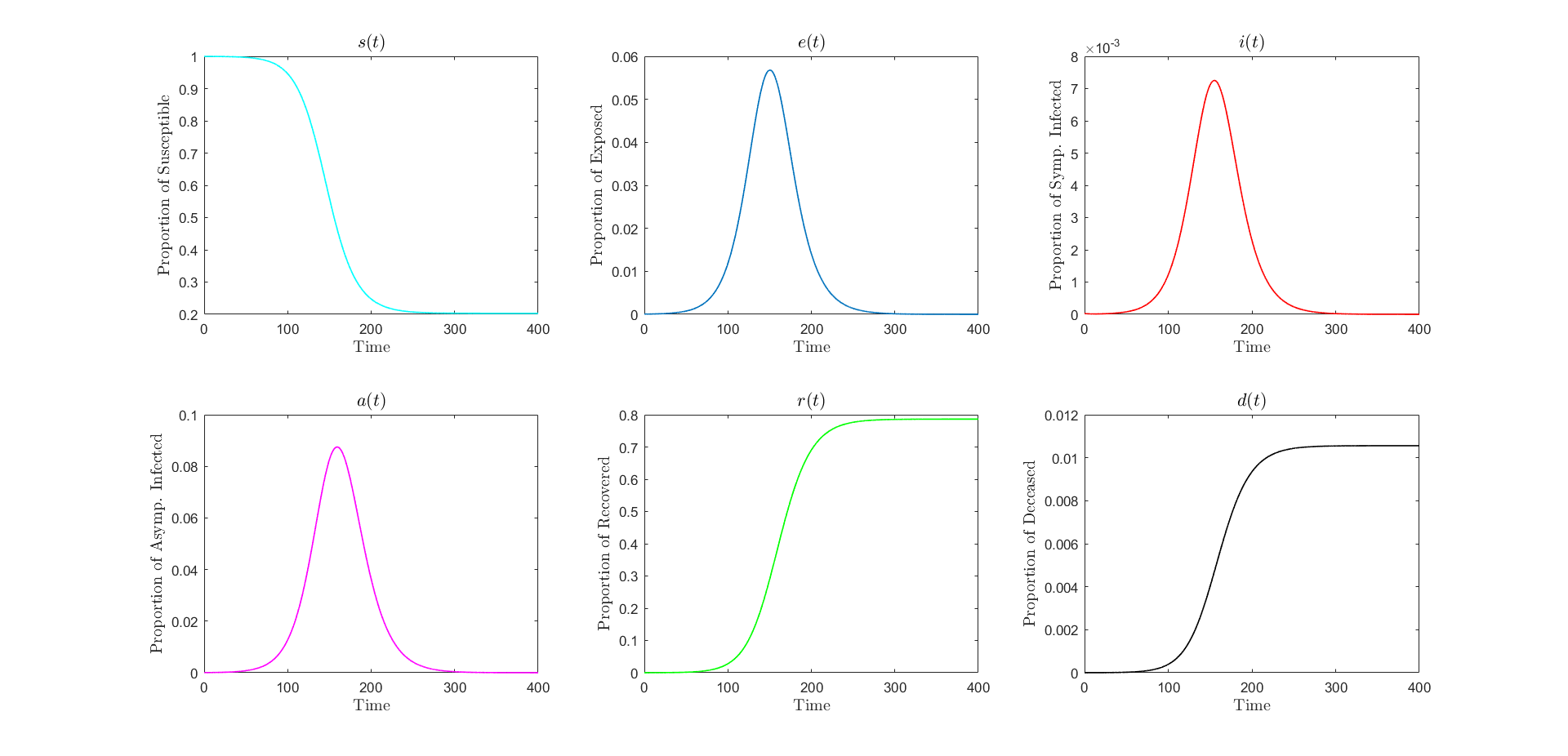}
\caption{Individual dynamics of the SEIARD compartments~\eqref{E3.1}--\eqref{E3.2}.}\label{F3}
\end{figure}

Figure~\ref{F4} illustrates the compartmental dynamics relevant to the financial aspects of epidemic insurance modeling. The left panel displays the time evolution of the premium-paying compartments $s(t)$, $e(t)$, and $r(t)$, and are therefore assumed to contribute regular insurance premiums. The right panel presents the dynamics of the benefit-related compartments $i(t)$, $a(t)$, and $d(t)$. 
These compartments are crucial for estimating insurance liabilities arising from hospitalization costs and death benefits. Together, this figure complements the epidemiological view by highlighting the population groups involved in premium inflows and benefit outflows, i.e. Figure~\ref{F2}.
\begin{figure}[H]
\centering
\includegraphics[width=1\textwidth]{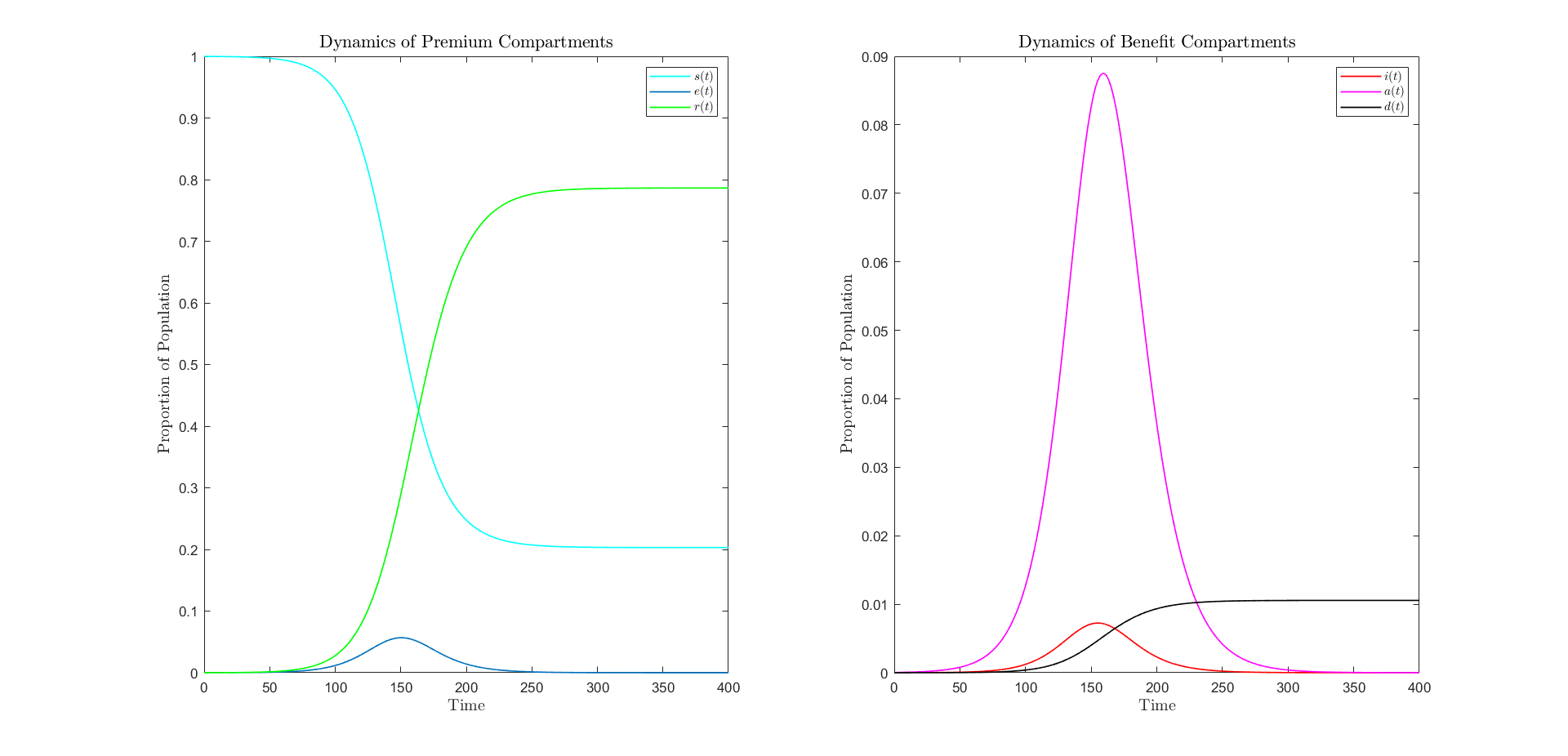}
\caption{Dynamics of premium-paying (\(s(t), e(t), r(t)\)) and benefit-receiving (\(i(t), a(t), d(t)\)) compartments.}\label{F4}
\end{figure}

Figure~\ref{F5} presents the temporal dynamics of the force of infection $\mu_t^{s+e}$ and the force of removal $\mu_t^{i+a}$, which are key elements in modeling epidemic-induced insurance transitions. 
The left panel shows the evolution of the infection force $\mu_t^{s+e}$, which peaks sharply around day 160 before rapidly declining, indicating the period of highest transmission intensity.
The middle panel shows the force of removal $\mu_t^{i+a}$, starting from a negative value and gradually increasing towards a plateau, reflecting the accumulation of recovery and mortality effects over time. 
The right panel combines both curves for direct comparison, emphasizing the temporal lag between peak infection pressure and the stabilization of removal effects. 
This figure highlights the dynamic interplay between infection spread and population outflow from the infectious compartments, essential for assessing risk exposure and estimating insurance liabilities throughout the epidemic lifecycle.

\begin{figure}[htb]
\centering
\includegraphics[width=1\textwidth]{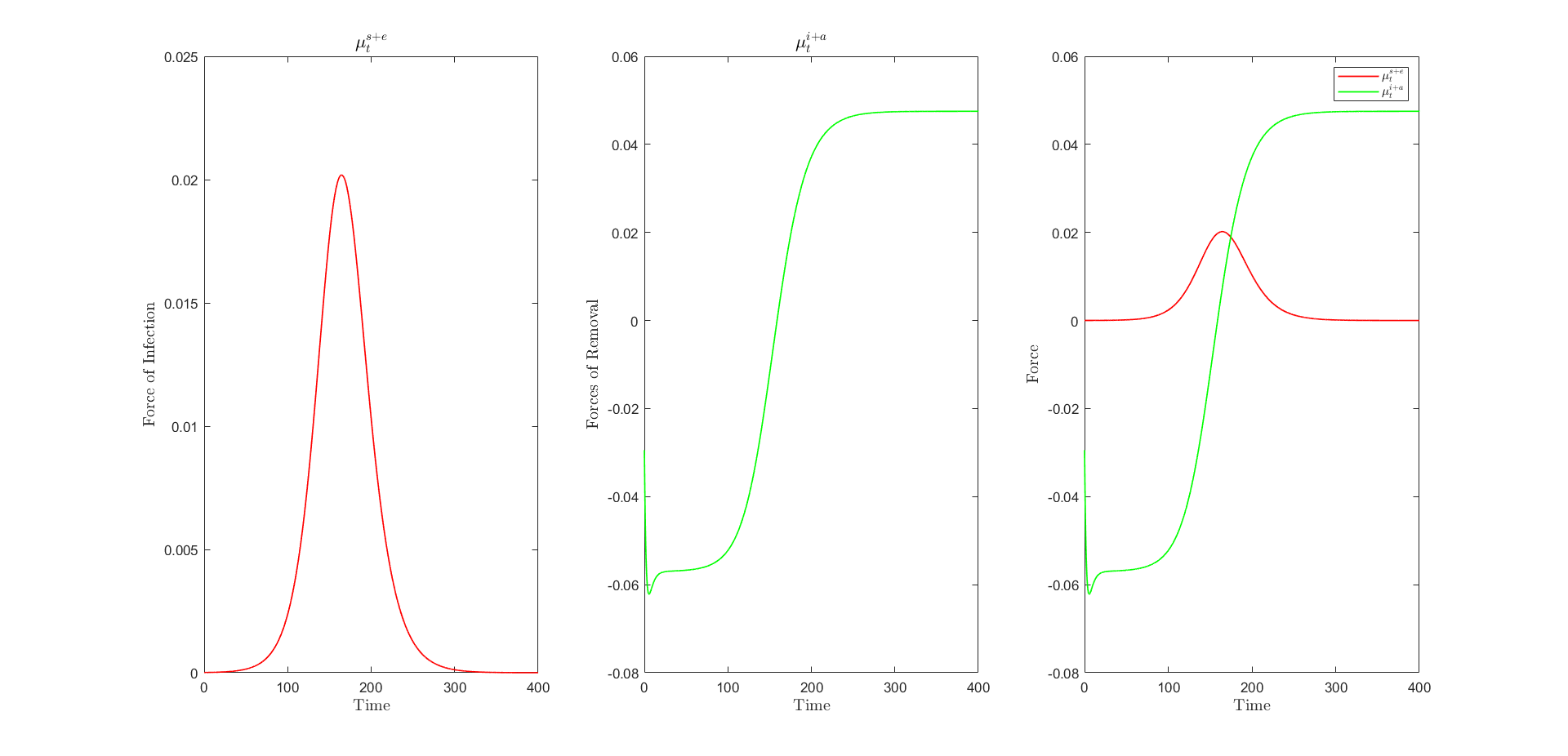}
\caption{Temporal evolution of the force of infection~\eqref{ForInfI} and force of removal~\eqref{ForRemov}.}\label{F5}
\end{figure}

Figure~\ref{F6} presents a comprehensive view of the infection and mortality risks during the epidemic, as well as their consequences on survival probabilities.
The top row shows the evolution of the force of infection $\lambda(t)$ (left panel), the force of mortality $\mu^d(t)$ (middle panel), and their direct comparison (right panel). The force of infection peaks around day 160, corresponding to the period of maximum transmission intensity. 
Its bell-shaped curve reflects the typical trajectory of an epidemic wave. 
In contrast, the force of mortality remains several orders of magnitude smaller but follows a similar temporal pattern, indicating that death occurrences are closely linked to the infectious burden, albeit with much lower magnitude.
The bottom row displays the evolution of the probabilities $p_s(t)$ (left) and $p_l(t)$ (middle), and their combined view (right). The infection-free probability $p_s(t)$ decreases rapidly from 1 to approximately 0.23, highlighting the high cumulative exposure of the population to the infection. On the other hand, the epidemic-adjusted survival probability $p_l(t)$ remains close to 1, with a modest decline due to the relatively low mortality force. This gap between $p_s(t)$ and $p_l(t)$ illustrates the fact that while a large proportion of the population may be infected, only a small fraction succumb to the disease under the assumed parameter regime.
These quantities are essential for quantifying the individual-level risks and for evaluating insurance models that account for both infection and mortality events. 
In particular, the joint dynamics of $\lambda(t)$, $p_s(t)$, $\mu^d(t)$, and $p_l(t)$ inform the temporal evolution of insurance exposure and help calibrate risk-adjusted actuarial strategies.
\begin{figure}[H]
\centering
\includegraphics[width=1\textwidth]{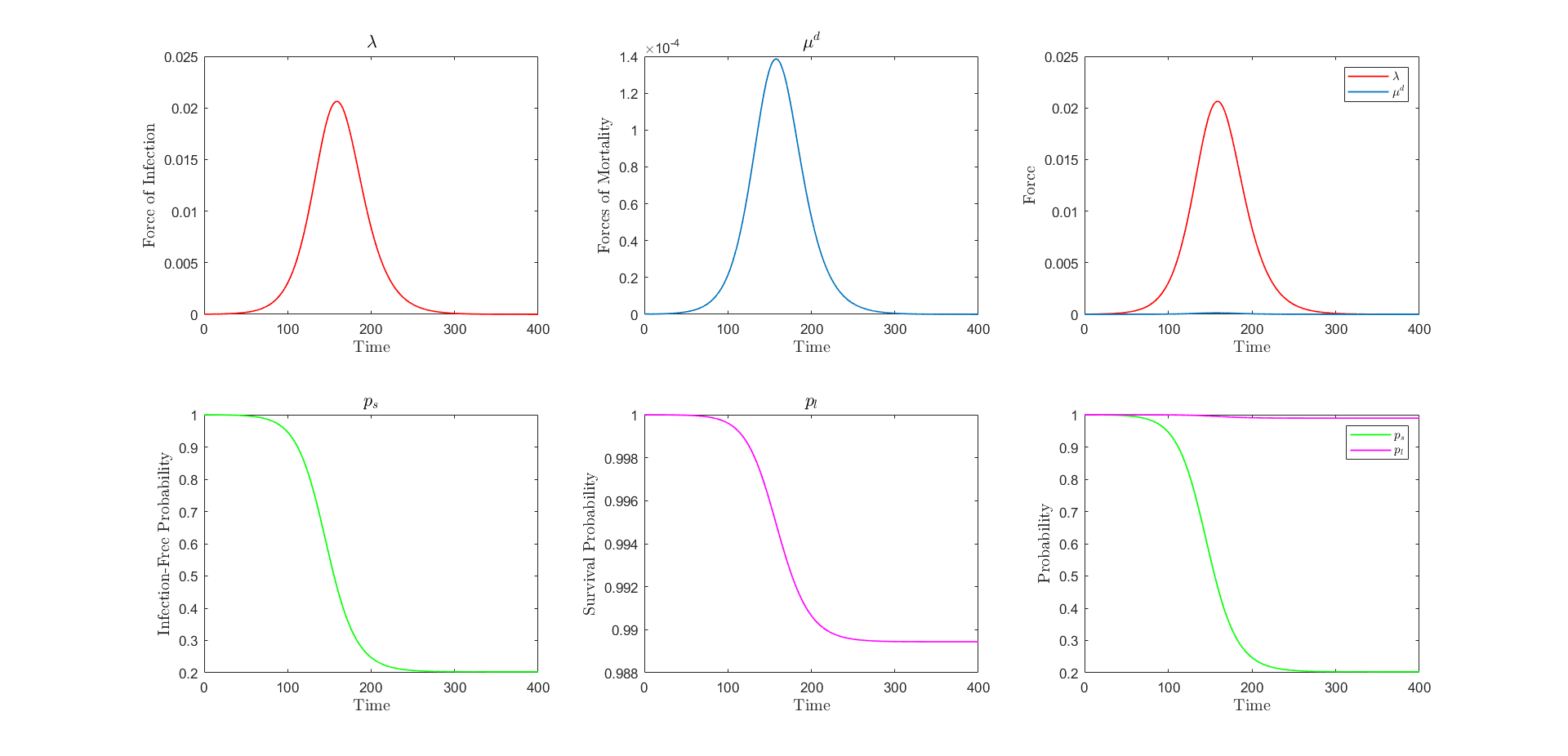}
\caption{Temporal dynamics of the force of infection~\eqref{ForInfII}, the force of mortality~\eqref{ForMort}, the infection-free probability~\eqref{ProbForInfII}, and the epidemic-adjusted survival probability~\eqref{ProbForMort}.}\label{F6}
\end{figure}

Figure~\ref{F7} illustrates the comparison between two expressions of the force of infection: the empirical formulation $\mu^{s+e}_t$, and the mechanistic expression $\lambda(t)$. 
Both curves display a similar bell-shaped profile, peaking near day 170, which corresponds to the period of highest epidemic intensity. 
The empirical force $\mu^{s+e}_t$, shown in dark red, captures the relative rate of depletion in the non-infected population (susceptible and exposed), while the mechanistic force $\lambda(t)$, shown in dark blue, quantifies the instantaneous risk of infection based on the infectious and asymptomatic classes. 
Although both curves are closely aligned in shape and amplitude, a slight phase shift is observed, with $\mu^{s+e}_t$ peaking marginally earlier. This distinction reflects the dynamic lag between exposure and infectiousness. 
Overall, the consistency between the two formulations validates the model’s internal coherence and provides complementary perspectives: $\lambda(t)$ is appropriate for modeling transmission processes, whereas $\mu^{s+e}_t$ offers an aggregate view of epidemic pressure from the perspective of the non-infected population.
\begin{figure}[H]
\centering
\includegraphics[width=1\textwidth]{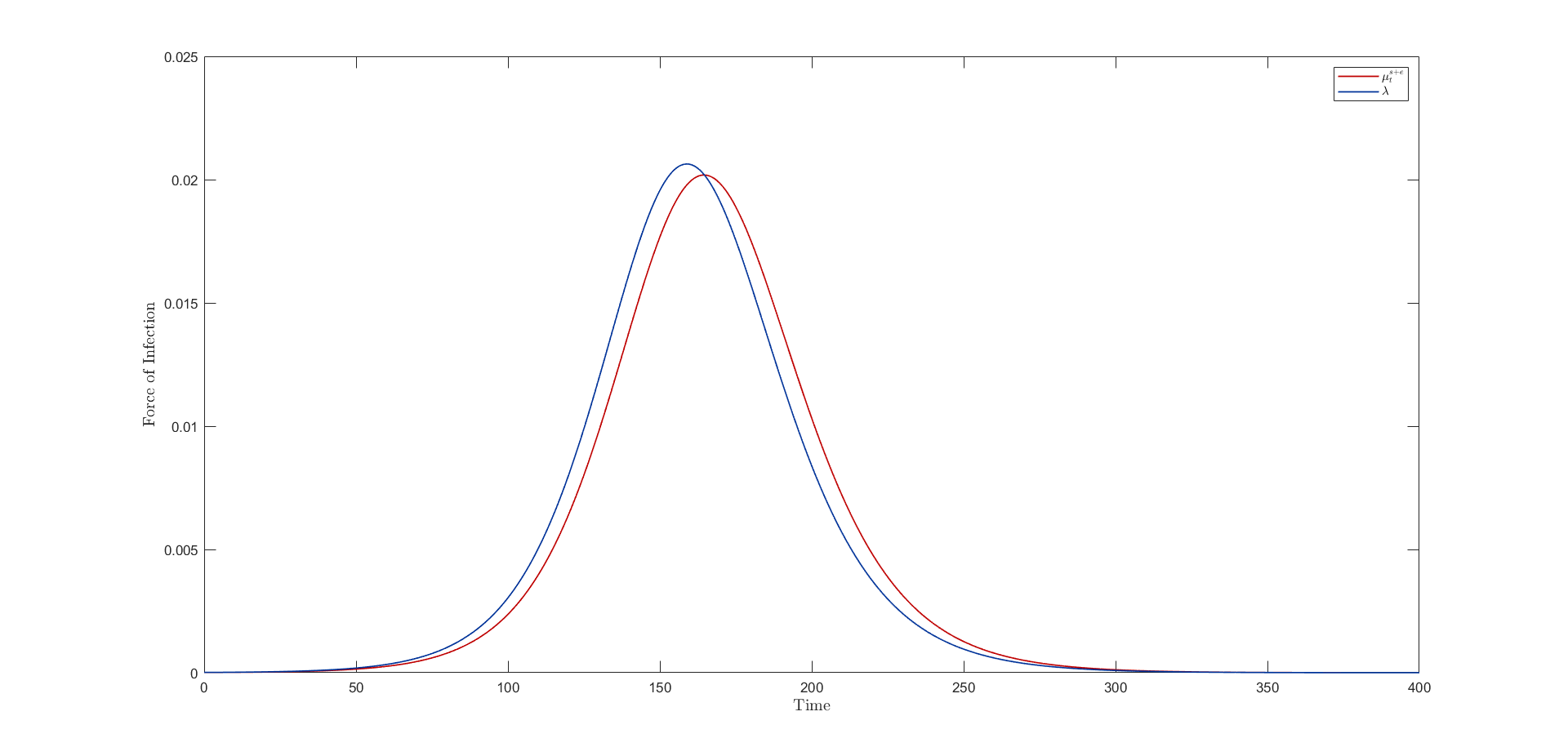}
\caption{Comparison of the two forces of infection~\eqref{ForInfII} and \eqref{ForInfI}.}\label{F7}
\end{figure}

Figure~\ref{F8} illustrates the time-dependent behavior of the reserve function $V(t)$, defined by the net present value of the future expected benefits minus the expected liabilities, over the time horizon $[t, T]$. The reserve accounts for the insured benefits paid in the event of infection or disease-related death and subtracts the collected premiums from the susceptible, exposed, and recovered populations.\\
Numerically, the o\textit{ptimal admissible premium rate} is $\pi^* = 2.9861\,e^{-06}$. The figure illustrates the evolution of the reserve function $V(t)$ under three different premium rates: $\pi = \pi^* - \varepsilon$ (top left), $\pi = \pi^*$ (top right), and $\pi = \pi^* + \varepsilon$ (bottom left), where $\varepsilon$ is a small positive perturbation. These simulations allow us to assess the sensitivity of the reserve to variations in the premium rate and validate the actuarial balance at $\pi = \pi^*$.
In each scenario, the reserve increases initially as the epidemic unfolds and reaches a peak near day 150, which coincides with the peak in infection prevalence. Then, it declines as the epidemic wanes and the liabilities diminish.
The comparison plot (bottom right) shows how the premium rate affects the reserve trajectory:
\begin{itemize}
\item For $\pi = \pi^* - \varepsilon$, the reserve reaches the highest value but eventually declines significantly, even becoming negative near the end of the period, indicating a potential deficit.

\item For $\pi = \pi^*$, the reserve stays strictly positive and tends to zero as $t \to T$, confirming that the premium exactly covers the expected liabilities.

\item For $\pi = \pi^* + \varepsilon$, the reserve is lower overall, with an early peak followed by a gradual decay and a positive terminal value, reflecting a surplus due to excess premium.
\end{itemize}
This analysis confirms the sensitivity of the reserve function to small deviations in the premium rate. It highlights the importance of carefully calibrating $\pi$ to avoid underfunding or unnecessary surplus accumulation. In particular, $\pi^*$ ensures that the insurer remains solvent with zero expected surplus or deficit at the terminal time $T$.
\begin{figure}[H]
\centering
\includegraphics[width=1\textwidth]{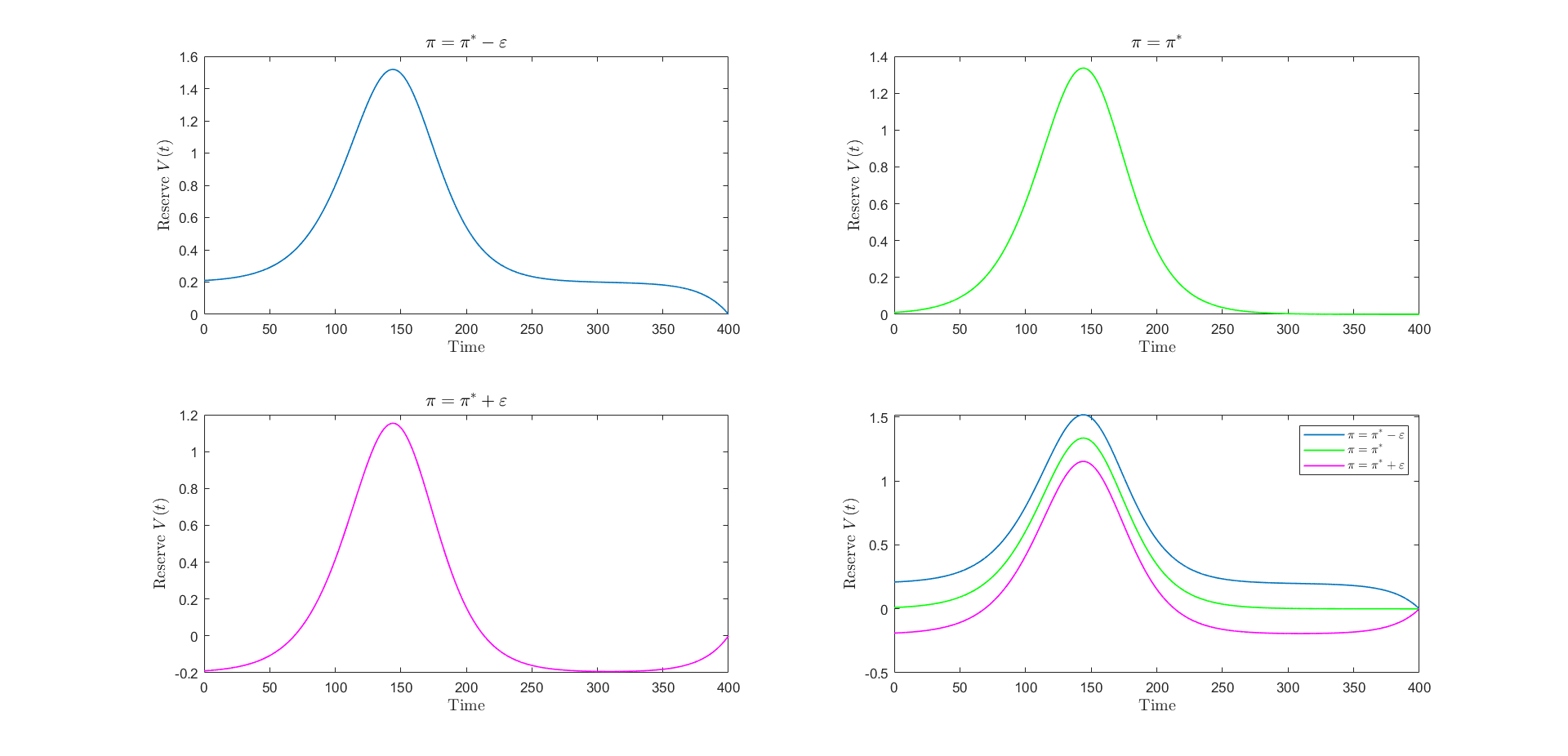}
\caption{Temporal evolution of the reserve function~\eqref{eq:reserve} for three different premium rates.}\label{F8}
\end{figure}

\subsection{Actuarial Interpretation of the Reserve Function}\label{S5.3}
In actuarial terms, the reserve function $V(t)$ quantifies the net liability of the insurer at time $t$. It serves as a key indicator of the insurer’s financial balance throughout the coverage period;

\begin{itemize}
\item If $V(t) > 0$: The insurer holds a positive reserve at time $t$, implying that the expected liabilities outweigh the incoming premiums. This situation requires additional capital to ensure solvency and reflects a conservative pricing strategy.

\item If $V(t) < 0$: The insurer has a negative reserve, indicating that the present value of premium income exceeds expected future costs. While this may suggest a surplus, it typically signals an overpriced premium, raising concerns about fairness, regulatory compliance, and long-term sustainability. Persistent negative reserves are generally unacceptable in actuarial practice.

\item If $V(t) \approx 0$: This is the ideal actuarial equilibrium, where the premium has been calibrated precisely to match the expected future obligations. It satisfies the principle of equivalence, which states that the present value of benefits and costs should equal the present value of premiums in a fairly priced insurance scheme.
\end{itemize}

In our simulations, this actuarial balance is achieved when the premium is set to its optimal value $\pi = \pi^*$. Deviating from this optimal rate, either by underpricing ($\pi^* - \varepsilon$) or overpricing ($\pi^* + \varepsilon$), results in non-zero terminal reserves, corresponding to either a funding gap or an actuarial surplus. These deviations highlight the importance of precise premium calibration to ensure fairness, solvency, and consistency with actuarial principles.

\section{Conclusion}\label{S6}
In this paper, we developed an actuarial framework that incorporates epidemic dynamics into life insurance modeling using the SEIARD compartmental model. 
By explicitly accounting for incubation periods, asymptomatic transmission, and disease-induced mortality, our approach extends traditional actuarial models to better reflect the temporal nature of infectious disease outbreaks. 
We derived closed-form expressions for key actuarial quantities such as premiums, annuity benefits, and reserve functions, all dynamically adjusted to reflect the evolving epidemiological environment.

Numerical simulations, carried out through an NSFD scheme, demonstrated the interplay between epidemiological parameters and financial outcomes. 
The SEIARD model~\eqref{E3.1} revealed a high prevalence of asymptomatic transmission, contributing significantly to the infection pressure. 
The forces of infection, mortality, and removal exhibited dynamic trajectories with notable phase lags, highlighting the timing mismatch between exposure and outcome. 
Furthermore, our analysis of infection-free and epidemic-adjusted survival probabilities showed that while a large fraction of the population may contract the disease, mortality remains comparatively low under the chosen parameters.

From an actuarial perspective, the reserve function $V(t)$ provided a time-resolved view of the insurer's liability throughout the epidemic. 
Simulations showed that the reserve is highly sensitive to small deviations in the premium rate. 
In particular, the optimal premium $\pi^*$ achieved actuarial balance by ensuring that the reserve converges to zero at the end of the coverage period. 
Deviations from this rate resulted in either surplus accumulation or potential solvency deficits, emphasizing the importance of dynamic premium calibration under epidemic risk.

\section*{Declarations}

\subsection*{Data availability} 
All information analyzed or generated, which would support the results of this work are available in this article.
No data was used for the research described in the article.

\subsection*{Conflict of interest} 
The authors declare that there are no problems or conflicts 
of interest between them that may affect the study in this paper.

\nocite{Nkeki2024}
\nocite{Chatterjee2008}
\nocite{Macdonald2005}
\nocite{amrullah2025actuarial}
\nocite{Feng2021}
\nocite{Chernov2021}
\bibliographystyle{unsrt}
\bibliography{References}

\begin{thebibliography}{10}

\bibitem{Yang2024}
J.~Yang, X.~Duan, and G.~Sun.
\newblock An immuno-epidemiological model with non-exponentially distributed
  disease stage on complex networks.
\newblock {\em Journal of Theoretical Biology}, 595:111964, 2024.

\bibitem{Simeonov2023}
O.~Simeonov and C.~D. Eaton.
\newblock Modeling the drivers of oscillations in {COVID-19} data on college
  campuses.
\newblock {\em Annals of Epidemiology}, 82:40–44, 2023.

\bibitem{Zinihi2025S}
A.~Zinihi, M.~Ehrhardt, and M.~R. Sidi~Ammi.
\newblock Spatiotemporal {SEIQR} epidemic mdeling with optimal control for
  vaccination, treatment, and social measures, 2025.
\newblock arXiv preprint 2507.09328.

\bibitem{Chang2022}
L.~Chang, S.~Gao, and Z.~Wang.
\newblock Optimal control of pattern formations for an {SIR}
  reaction–diffusion epidemic model.
\newblock {\em Journal of Theoretical Biology}, 536:111003, 2022.

\bibitem{Zinihi2025FDE}
A.~Zinihi, M.~R. Sidi~Ammi, and D.~F.~M. Torres.
\newblock Fractional differential equations of a reaction-diffusion {SIR} model
  involving the {C}aputo-fractional time-derivative and a nonlinear diffusion
  operator.
\newblock {\em Evolution Equations and Control Theory}, 14(5):944–967, 2025.

\bibitem{LuciaSanz2023}
A.~Lucia-Sanz, A.~Magalie, R.~Rodriguez-Gonzalez, C.~Y. Leung, and J.~S. Weitz.
\newblock Modeling shield immunity to reduce {COVID-19} transmission in
  long-term care facilities.
\newblock {\em Annals of Epidemiology}, 77:44–52, 2023.

\bibitem{kermack1927}
W.~O. Kermack and A.~G. McKendrick.
\newblock A contribution to the mathematical theory of epidemics.
\newblock {\em Proceedings of the Royal Society of London. Series A},
  115(772):700–721, 1927.

\bibitem{Azimaqin2025}
N.~Azimaqin, X.~Liu, Y.~Wei, and Y.~Li.
\newblock Explicit formula of the basic reproduction number for heterogeneous
  age‐structured {SIR} epidemic model.
\newblock {\em Mathematical Methods in the Applied Sciences},
  48(12):11798–11812, 2025.

\bibitem{Hethcote2000}
H.~W. Hethcote.
\newblock The mathematics of infectious diseases.
\newblock {\em SIAM Review}, 42(4):599–653, 2000.

\bibitem{Brauer2019}
F.~Brauer, C.~Castillo-Chavez, and Z.~Feng.
\newblock {\em Mathematical models in epidemiology}.
\newblock Texts in Applied Mathematics. Springer New York, 2019.

\bibitem{Giordano2020}
G.~Giordano, F.~Blanchini, R.~Bruno, P.~Colaneri, A.~Di~Filippo, A.~Di~Matteo,
  and M.~Colaneri.
\newblock Modelling the {COVID-19} epidemic and implementation of
  population-wide interventions in {Italy}.
\newblock {\em Nature Medicine}, 26(6):855–860, 2020.

\bibitem{Zhao2020}
S.~Zhao, Q.~Lin, J.~Ran, S.~S. Musa, G.~Yang, W.~Wang, Y.~Lou, D.~Gao, L.~Yang,
  D.~He, and M.~H. Wang.
\newblock Preliminary estimation of the basic reproduction number of novel
  coronavirus ({2019-nCoV}) in {China}, from 2019 to 2020: {A} data-driven
  analysis in the early phase of the outbreak.
\newblock {\em International Journal of Infectious Diseases}, 92:214–217,
  2020.

\bibitem{Dickson2019}
D.~C.~M. Dickson, M.~R. Hardy, and H.~R. Waters.
\newblock {\em Actuarial Mathematics for Life Contingent Risks}.
\newblock Cambridge University Press, 2019.

\bibitem{Kaas2008}
R.~Kaas, M.~Goovaerts, J.~Dhaene, and M.~Denuit.
\newblock {\em Modern Actuarial Risk Theory}.
\newblock Springer Berlin Heidelberg, 2008.

\bibitem{Feng2011}
R.~Feng and J.~Garrido.
\newblock Actuarial applications of epidemiological models.
\newblock {\em North American Actuarial Journal}, 15(1):112–136, 2011.

\bibitem{Hainaut2020}
D.~Hainaut.
\newblock An actuarial approach for modeling pandemic risk.
\newblock {\em Risks}, 9(1):3, 2020.

\bibitem{Nkeki2024}
C.~I. Nkeki and E.~H. Iroh.
\newblock On epidemiological and actuarial analyses of health insurance models
  for communicable diseases.
\newblock {\em North American Actuarial Journal}, 29(2):422–451, 2024.

\bibitem{Francis2023}
L.~Francis and M.~Steffensen.
\newblock Individual life insurance during epidemics.
\newblock {\em Annals of Actuarial Science}, 18(1):152–175, 2023.

\bibitem{Chatterjee2008}
T.~Chatterjee, A.~S. Macdonald, and H.~R. Waters.
\newblock A model for ischaemic heart disease and stroke {I}: {T}he model.
\newblock {\em Annals of Actuarial Science}, 3(1–2):45–81, 2008.

\bibitem{Macdonald2005}
A.~S. Macdonald, H.~R. Waters, and C.~T. Wekwete.
\newblock A model for coronary heart disease and stroke with applications to
  critical illness insurance underwriting {II}: {A}pplications.
\newblock {\em North American Actuarial Journal}, 9(1):41–56, 2005.

\bibitem{Nkeki2024Epi}
C.~I. Nkeki and E.~H. Iroh.
\newblock Epidemiological and health insurance models for a communicable
  disease.
\newblock {\em Annals of Financial Economics}, 19(02):2450007, 2024.

\bibitem{Feng2021}
R.~Feng, J.~Garrido, L.~Jin, S.-H. Loke, and L.~Zhang.
\newblock {\em Epidemic compartmental models and their insurance applications},
  page 13–40.
\newblock Springer International Publishing, 2021.

\bibitem{Chernov2021}
A.~A. Chernov, A.~A. Shemendyuk, and M.~Y. Kelbert.
\newblock Fair insurance premium rate in connected {SEIR} model under epidemic
  outbreak.
\newblock {\em Mathematical Modelling of Natural Phenomena}, 16:34, 2021.

\bibitem{de2020seiard}
U.~A.-P. de~Le{\'o}n, {\'A}.~G.~C. P{\'e}rez, and E.~Avila-Vales.
\newblock An {SEIARD} epidemic model for {COVID-19} in {Mexico}: mathematical
  analysis and state-level forecast.
\newblock {\em Chaos, Solitons \& Fractals}, 140:110165, 2020.

\bibitem{Bowers1997}
N.~L. Bowers, H.~U. Gerber, J.~C. Hickman, D.~A. Jones, and C.~J. Nesbitt.
\newblock Actuarial mathematics.
\newblock 1997.

\bibitem{Mickens1993}
R.~E. Mickens.
\newblock {\em Nonstandard finite difference models of differential equations}.
\newblock WORLD SCIENTIFIC, December 1993.

\bibitem{Zinihi2025NSFD}
A.~Zinihi, M.~Ehrhardt, and M.~R. Sidi~Ammi.
\newblock A nonstandard finite difference scheme for an {SEIQR} epidemiological
  {PDE} model, 2025.
\newblock arXiv preprint 2508.02928.

\bibitem{Ehrhardt2013}
M.~Ehrhardt and R.~E. Mickens.
\newblock A nonstandard finite difference scheme for convection–diffusion
  equations having constant coefficients.
\newblock {\em Applied Mathematics and Computation}, 219(12):6591–6604,
  February 2013.

\bibitem{Costa2024}
G.~Costa, M.~Lobosco, M.~Ehrhardt, and R.~Reis.
\newblock Mathematical analysis and a nonstandard scheme for a model of the
  immune response against {COVID-19}, 2024.

\bibitem{Maamar2023}
M.~H. Maamar, M.~Ehrhardt, and L.~Tabharit.
\newblock A nonstandard finite difference scheme for a time-fractional model of
  zika virus transmission.
\newblock {\em Mathematical Biosciences and Engineering}, 21(1):924–962,
  2023.

\bibitem{Github2020}
GitHub.
\newblock {COVID-19} in {M}exico and state-level forecast;
  \url{https://github.com/UgoAvila/COVD-19-in-Mexico-and-State-Level-Forecast},
  Accessed on: August 10, 2024 (2020).

\bibitem{Conacyt2020}
Gobierno de~México.
\newblock Datos abiertos - {COVID-19 M}éxico;
  \url{https://datos.covid-19.conacyt.mx/#DOView}, Accessed on: August 10, 2024
  (2020).

\bibitem{AvilaPoncedeLen2020}
U.~Avila-Ponce~de León, A.~G.~C. Pérez, and E.~Avila-Vales.
\newblock An {SEIARD} epidemic model for {COVID-19} in {Mexico}: {M}athematical
  analysis and state-level forecast.
\newblock {\em Chaos, Solitons \& Fractals}, 140:110165, November 2020.

\bibitem{amrullah2025actuarial}
F.~R. Amrullah, M.~Kurniawaty, and S.~Fitri.
\newblock Actuarial modelling for diabetes mellitus insurance.
\newblock {\em CAUCHY: Jurnal Matematika Murni dan Aplikasi}, 10(1):96--105,
  2025.

\end{thebibliography}


\end{document}